\RequirePackage{fix-cm}
\RequirePackage{amsmath}
\documentclass[twocolumn,final]{svjour3}          
\smartqed  
\usepackage{amsmath}
\usepackage{amsfonts}
\usepackage{amssymb}
\usepackage{graphicx}
\usepackage{psfrag}
\usepackage{subfigure}
\usepackage{array}
\usepackage{eurosym}
\usepackage{setspace}
\usepackage{algorithm}
\usepackage{algorithmic}
\usepackage{bm}
\usepackage[numbers,sort&compress]{natbib}
\usepackage{hyperref}
\usepackage{extarrows}
\usepackage{fancynum}
\usepackage{color}
\usepackage{multirow}
\usepackage{makecell}
\usepackage{harpoon}
\usepackage{url}

\newcommand{\myvec}[1]%
{\stackrel{\raisebox{-2pt}[0pt][0pt]{\small$\rightharpoonup$}}{#1}}
\setfnumgsym{\allowbreak}
\interdisplaylinepenalty=2500
\hyphenation{op-tical net-works semi-conduc-tor}

%
%
%
%
\journalname{Journal of xxx}
\begin{document}
\title{Segmented Successive Cancellation List Polar Decoding with Tailored CRC
}


\newtheorem{Thm}{Theorem}

\newtheorem{Lem}{Lemma}
\newtheorem{Cor}{Corollary}
\newtheorem{Def}{Definition}
\newtheorem{Exam}{Example}
\newtheorem{Alg}{Algorithm}
\newtheorem{Prob}{Problem}
\newtheorem{Rem}{Remark}
\renewcommand\thesubsection{\thesection.\Alph{subsection}}
\renewcommand{\algorithmicrequire}{\textbf{Input:}}
\renewcommand{\algorithmicensure}{\textbf{Output:}}

\author{Huayi~Zhou \and Xiao~Liang \and Liping~Li \and Zaichen~Zhang \and Xiaohu~You \and Chuan~Zhang$^*$ 
}


\institute{Huayi~Zhou \and Xiao~Liang \and Zaichen~Zhang \and Xiaohu~You \and Chuan~Zhang$^*$ \at
              Lab of Efficient Architectures for Digital-communication and Signal-processing (LEADS),\\
              National Mobile Communications Research Laboratory, \\
              Southeast University, Nanjing, China \\
              \email{\{hyzhou, xiao\_liang, zczhang, xhyu, \\chzhang\}@seu.edu.cn}\\
              $^*$corresponding author
              \and
              Liping~Li\at
              Key Laboratory of Intelligent Computing and Signal Processing of the MoE, Anhui University, Hefei, China\\
              \email{liping\_li@ahu.edu.cn}
}

\date{Received: October 00, 2017 / Accepted: date}

\maketitle

\begin{abstract}
As the first error correction codes provably achieving the symmetric capacity of binary-input discrete memory-less channels (B-DMCs), polar codes have been recently chosen by 3GPP for eMBB control channel. Among existing algorithms, CRC-aided successive cancellation list (CA-SCL) decoding is favorable due to its good performance, where CRC is placed at the end of the decoding and helps to eliminate the invalid candidates before final selection. However, the good performance is obtained with a complexity increase that is linear in list size $L$. In this paper, the tailored CRC-aided SCL (TCA-SCL) decoding is proposed to balance performance and complexity. Analysis on how to choose the proper CRC for a given segment is proposed with the help of \emph{virtual transform} and \emph{virtual length}. For further performance improvement, hybrid automatic repeat request (HARQ) scheme is incorporated. Numerical results have shown that, with the similar complexity as the state-of-the-art, the proposed TCA-SCL and HARQ-TCA-SCL schemes achieve $0.1$ dB and $0.25$ dB performance gain at frame error rate $\textrm{FER}=10^{-2}$, respectively. Finally, an efficient TCA-SCL decoder is implemented with FPGA demonstrating its advantages over CA-SCL decoder.
\keywords{Polar codes \and segmented CA-SCL \and tailored CRC \and HARQ \and VLSI}
\end{abstract}

\section{Introduction}\label{sec:intr}
Polar codes, proposed by Ar{\i}kan~\cite{arikan2009rate,Arikan2009Channel}, are considered as a breakthrough of coding theory. It is shown that polar codes can provably achieve the symmetric capacity of binary-input discrete memory-less channels (B-DMCs)~\cite{Arikan2009Channel}. Besides the capacity achieving performance, the asset of polar coding compared to the state-of-the-art (SOA) is its corresponding low-complexity decoding algorithms. Therefore, polar codes have been adopted by 3GPP for eMBB control channels.

Though linear programming (LP) decoder~\cite{Goela2010On}, successive cancellation (SC) decoder, and belief propagation (BP) decoder~\cite{Arikan2008Reed,Hussami2009Performance} have been proposed for polar codes, their performance is not comparable with maximum likelihood (ML) decoder. Thus, the breadth-first SC decoder named SC list (SCL) decoder, was proposed by~\cite{Vardy2011List,Niu2012list}. Cyclic redundancy check (CRC), widely adopted for error detection, has been proved as a simple and effective enabler for further performance improvement with respect to SCL decoder. Numerical results have shown that, CRC-aided SCL (CA-SCL) decoder~\cite{Niu2012CRC} achieves at least no worse performance than the SOA turbo and low-density parity-check (LDPC) decoders~\cite{Arik:2015}. Usually, CRC is placed at the end of decoding to eliminate invalid candidates before final decision. The disadvantages are: 1) Though has better performance than SCL decoder, CA-SCL decoder still suffers from time and space complexity regarding the list size $L$. 2) For intermediate candidates which have already gone wrong, no early elimination could be taken in time until the decoding end is reached, and the computation afterward is in vain.

To address the complexity and redundancy, \cite{Zhou2016Segmented} proposed a segmented CA-SCL (SCA-SCL) decoder. At the same time, \cite{Hashemi2015Partitioned} independently proposed a partitioned CA-SCL (PSCL) decoder, which is similar as the SCA-SCL decoder but with a different partition method. Both decoders divide code bits into segments and insert CRC bits in between, to rule out invalid candidates per segment rather than to wait until the decoding ends. Thus, they can reduce redundancy while keeping comparable performance as CA-SCL decoders. However, existing decoders usually apply the same CRC length to the same number of information or code bits. Though convenient, those straightforward schemes fail to take the code construction into consideration. It is not clear whether the existing uniform partition schemes are optimal and whether better performance can be achieved with the same number of CRC bits.

To our best knowledge, no existing literature has discussed the CRC distribution for SCA-SCL decoding, and its hardware implementation. Analysing the CRC requirement by unequal-length segments and introducing concepts of \emph{virtual transform} and \emph{virtual length}, this paper devotes itself in figuring out a tailored CA-SCL (TCA-SCL) decoding of improved performance and lower complexity than SOA. An HARQ-TCA-SCL decoding is proposed for further performance improvement. Contributions of this paper are: 1) Efficient CRC distribution is proposed for the first time, showing performance advantage over SOA. 2) This paper does not limit itself to specific decoder design, but proposes a formal TCA methodology, which can be readily applied to any existing SCA-SCL decoders. 3) The efficient implementation methodology is also proposed and verified with FPGA implementations.

The remainder of the paper is organized as follows. Section~\ref{sec:preli} reviews the preliminaries. Section~\ref{sec:SCA} analyzes the SCA-SCL decoders for possible refinement. The TCA-SCL decoding is given in Section~\ref{sec:DC-SSCL}. The HARQ-TCA-SCL decoding is given in Section~\ref{sec:arq}. Section~\ref{sec:percom} gives the performance and complexity analysis of the proposed decoding schemes. Section~\ref{sec:arch} proposes a hardware architecture for TCA-SCL decoding. FPGA implementations are given in the same section. Finally, Section~\ref{sec:CON} concludes the entire paper.

\newcounter{mytempeqncnt}
\begin{figure*}[!t]
\normalsize
\setcounter{mytempeqncnt}{\value{equation}}
\setcounter{equation}{5}
\begin{equation}
\label{eqn_dbl_x}
\begin{aligned}
L_N^{(2i - 1)}(y_1^N,\hat u_1^{2i - 2}|{u_{2i - 1}}) &= \max{}^*(L_{N/2}^{(i)}(y_1^{N/2},\hat u_{1,o}^{2i - 2} \oplus \hat u_{1,e}^{2i - 2}|{u_{2i - 1}}) + L_{N/2}^{(i)}(y_{N/2 + 1}^N,\hat u_{1,e}^{2i - 2}|0),\\
& \qquad \qquad L_{N/2}^{(i)}(y_1^{N/2},\hat u_{1,o}^{2i - 2} \oplus \hat u_{1,e}^{2i - 2}|{{\bar u}_{2i - 1}}) + L_{N/2}^{(i)}(y_{N/2 + 1}^N,\hat u_{1,e}^{2i - 2}|1)),\\
L_N^{(2i)}(y_1^N,\hat u_1^{2i - 1}|{u_{2i}}) &= L_{N/2}^{(i)}(y_1^{N/2},\hat u_{1,o}^{2i - 2} \oplus \hat u_{1,e}^{2i - 2}|{u_{2i - 1}} \oplus {u_{2i}}) + L_{N/2}^{(i)}(y_{N/2 + 1}^N,\hat u_{1,e}^{2i - 2}|{u_{2i}}).
\end{aligned}
\end{equation}
\setcounter{equation}{\value{mytempeqncnt}}
\hrulefill
\vspace*{4pt}
\end{figure*}

\section{Preliminaries}\label{sec:preli}
\subsection{Polar Codes}
Denote the input alphabet, output alphabet, and transition probabilities of a B-DMC by $\mathcal{X}$, $\mathcal{Y}$, and $W(y|x)$. With block length $N=2^n$, the information vector, encoded vector, and received vector are $u^{N}_{1}=(u_{1},...,u_{N})$, $x^{N}_{1}=(x_{1},...,x_{N})$, and $y^{N}_{1}=(y_{1},...,y_{N})$. The polar encoding is given by
\begin{equation}\label{encoding}
  x_{1}^{N} = u_{1}^{N}G_{N} = u_{1}^{N}B_{N}F^{\otimes n},
\end{equation}
where $G_{N}$ and $B_{N}$ are the generation matrix and bit-reversal permutation matrix respectively, and $F = \left[\begin{smallmatrix} 1 & 0 \\ 1 & 1 \end{smallmatrix}\right]$. Transmitting channels between $x^{N}_{1}$ and $y^{N}_{1}$ are \\
$W^{(i)}_{N}(y^{N}_{1},u^{i-1}_{1}|u_{i})$, derived by channel combining
\begin{equation}\label{combining}
  W^{N}(y^{N}_{1}|x^{N}_{1})= W^{N}(y^{N}_{1}|u^{N}_{1}G_{N})
\end{equation}
and channel splitting
 \begin{equation}\label{eq:tp}
 \resizebox{.91\hsize}{!}{$
 W^{(i)}_{N}(y^{N}_{1},u^{i-1}_{1}|u_{i}) = \sum _{u^{N}_{i+1}} \frac{1}{2^{N-1}}W^{N}(y^{N}_{1}|x^{N}_{1}), i = 1,...,N.
 $}
\end{equation}

Define $I(W)$ as the symmetric capacity. For B-DMC $W$ and $\delta \in (0,1)$, ${W^{(i)}_{N}}$ polarizes: as $N$ goes to infinity via powers of $2$, $I(W^{(i)}_{N})\in (1-\delta,1]$ approaches $I(W)$ and $I(W^{(i)}_{N})\in [0,\delta)$ approaches $(1-I(W))$. In $(N,K)$ codes, the $K$ most reliable channels with indices in information set $\mathcal{A}$ are chosen to transmit the $K$ information bits in $u_{1}^{N}$; whereas the others, with indices in frozen set $\mathcal{A}_{c}$, transmit the $(N-K)$ frozen bits.
\subsection{SC and SCL Polar Decoders}\label{subsec:scl}
The SC polar decoding tree is a full binary tree. Fig. \ref{fig:SC} shows a toy example for $N = 8$. For each node at the $n$-th level, two possible choices are $0$ and $1$. Each set consisting of all the leaf nodes is associated with a unique estimated codeword ${\hat u}^{N}_{1} = (\hat u_{1}, \hat u_{2}, ..., \hat u_{N})$. If $i \in \mathcal{A}_{c}$, $\hat u_{i} = 0$. Otherwise, the SC decoder computes its log-likelihood ratio (LLR):
   \begin{equation}\label{eq:2}
       L^{(i)}_{N}(y^{N}_{1},\hat u^{i-1}_{1}) = \log {\frac {W^{(i)}_{N}(y^{N}_{1},\hat u ^{i-1}_{1}\mid u_{i} = 0)}{W^{(i)}_{N}(y^{N}_{1},\hat u ^{i-1}_{1}\mid u_{i} = 1)}},
   \end{equation}
and generates its decision as
\begin{equation}\label{eq:de}
{\hat u_i} =
\left\{
{\begin{aligned}
&{0,~{\rm{if}}~L_N^{(i)}(y_1^N,\hat u_1^{i - 1}) \ge 0};\\
&{1,~{\rm{otherwise}}}.
\end{aligned}}
\right.
\end{equation}

The LLR updating is conducted based on the two equations listed in Eq.~(\ref{eqn_dbl_x}). $\max ^*$ denotes the Jacobi logarithm:
\setcounter{equation}{6}
\begin{equation}\label{eq:ja}
\resizebox{.91\hsize}{!}{$
{\max ^*}({x_1},{x_2}) \buildrel \Delta \over = \ln ({e^{{x_1}}} + {e^{{x_2}}})
 = \max({x_1},{x_2}) + \ln(1 + {e^{ - \left| {{x_1} - {x_2}} \right|}}).
$}
\end{equation}

This recursive process starts from each (sub-)tree's root and always traverses the left branch before the right (Fig.~\ref{fig:SC}). When the leaf level is reached, hard decision is made and returned to the parent node.

\begin{figure}[htb]
\centering
\includegraphics[width=0.45\textwidth]{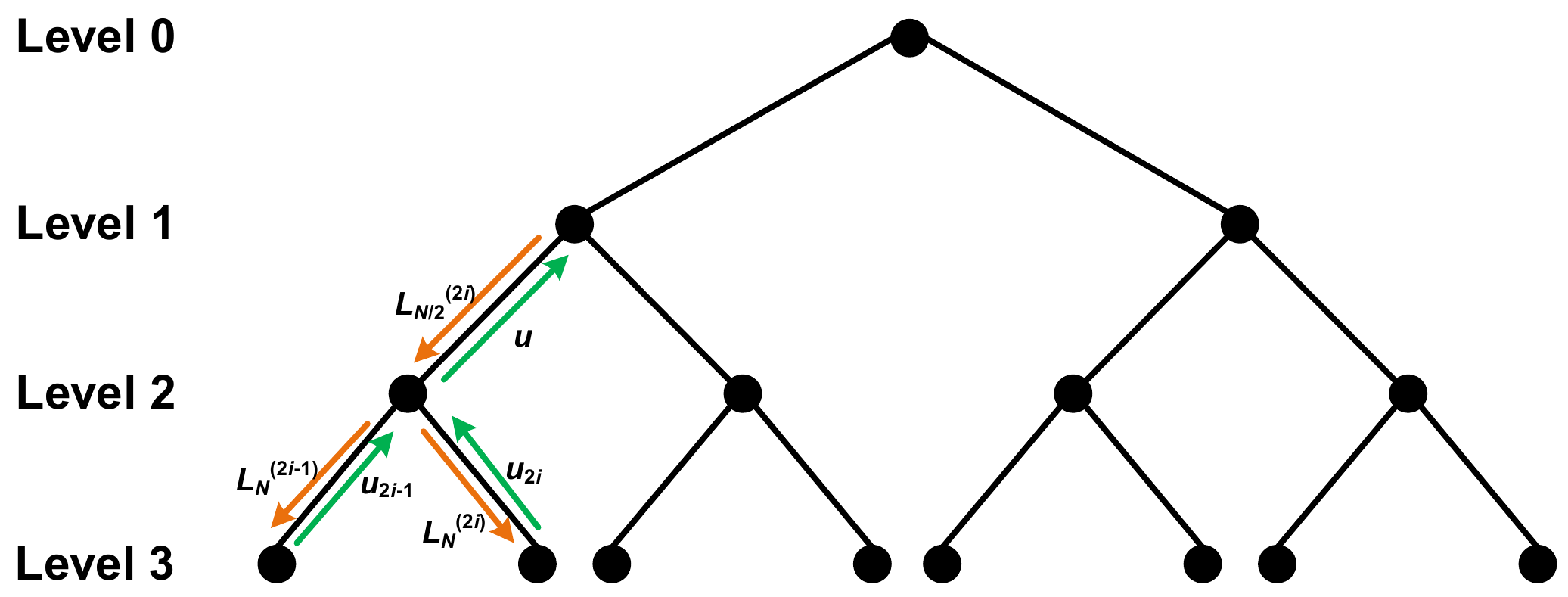}
\caption{Tree illustration of SC decoding process.}\label{fig:SC}
\end{figure}

As a greedy search algorithm, SC decoding keeps only one path based on step-wise decision, with complexity of $\mathcal{O}(N\log N)$. However, this single-candidate method only guarantees the local optimality, and will possibly result in incorrect result. To this end, the SCL decoding, which keeps a list of $L$ survivals, was proposed by~\cite{Vardy2011List,Niu2012list} independently. Fig.~\ref{fig:SCL} illustrates the difference between SC and SCL algorithms. The complexity of SCL decoder is $\mathcal{O}(LN\log N)$. At the $i$-th step, if $i \in \mathcal{A}$, the SCL decoder splits each current path into two paths with both $\hat{u}_{i}=0$ and $\hat{u}_{i}=1$. Out of the $2L$ paths, only the $L$ best ones are kept. Finally, the decoder chooses the best path at the end of decoding process.

    \begin{figure}[htb]
        \centering
            \includegraphics[width = 8 cm]{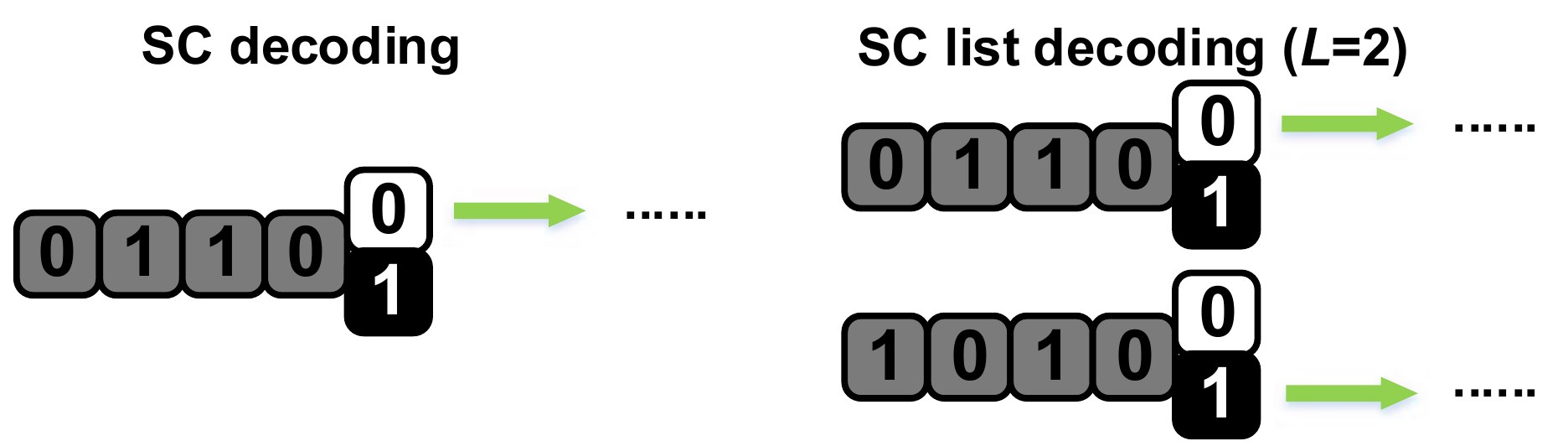}
        \caption{SC decoding and SCL decoding with $L = 2$.}\label{fig:SCL}
    \end{figure}
    \begin{figure}[htb]
        \centering
            \includegraphics[width = 7.5 cm]{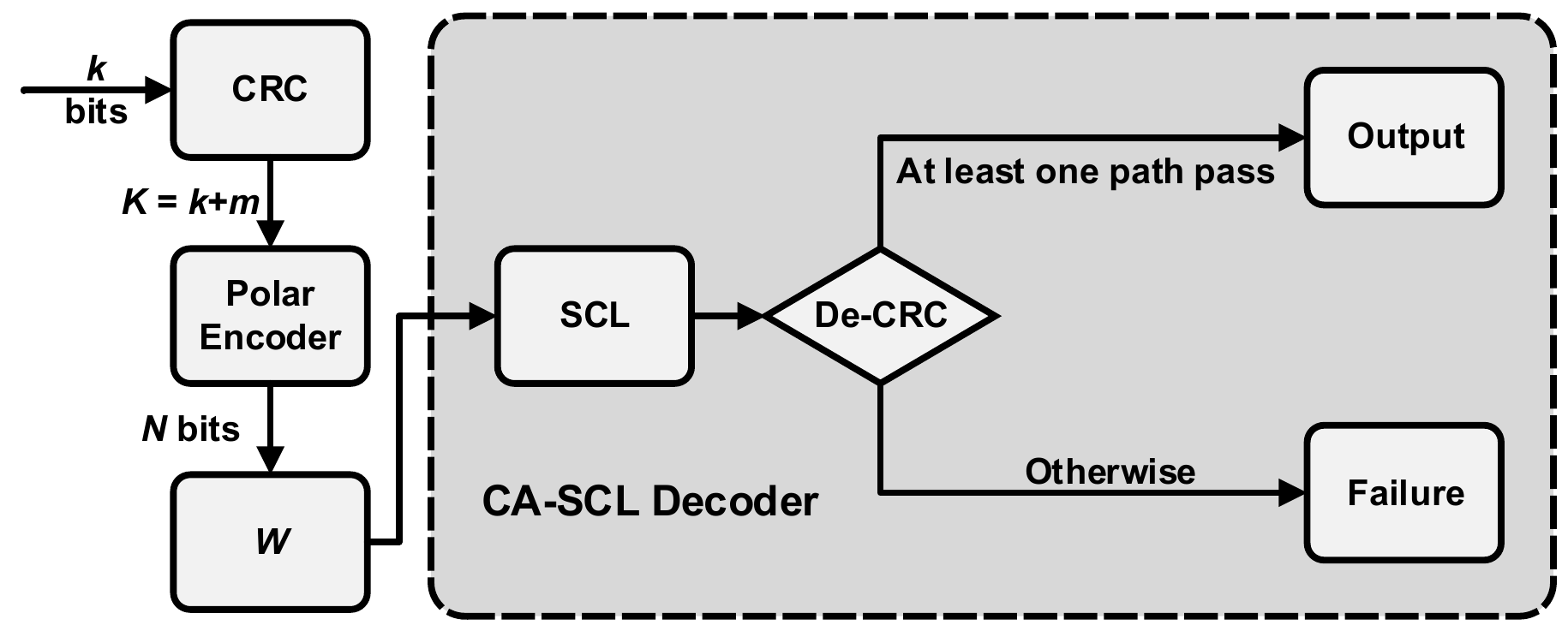}
        \caption{CA-SCL polar decoding.}\label{fig:CASCL}
    \end{figure}
\subsection{CA-SCL Polar Decoder}\label{subsec:cascl}
For further improvement, CA-SCL decoder introduces CRC as a detection tool at the end of decoding~\cite{Niu2012CRC}. Illustrated in Fig.~\ref{fig:CASCL}, CRC detector helps to decide which candidates are possibly correct before metric comparison. Here, $m$ denotes the number of CRC bits. The CRC-passed candidate with the largest metric value is chosen as the final result. If no candidate passes the CRC detection, a decoding failure is claimed.

\section{Segmented CA-SCL Decoding Schemes}\label{sec:SCA}
In this section, we first introduce two SCA-SCL decoding schemes, then propose a refined version. Without loss of generality, the $(1024,512)$ code~\cite{Arikan2009Channel} is employed as a running example, whose polarization is in Fig.~\ref{fig:IW}. Here $W$ is a BEC with erasure probability $\epsilon = 0.5$, $I(W^{(i)}_{N})$ is computed by:
    \begin{equation}\label{eq:3}
    \left\{
       \begin{aligned}
       &I(W^{(2i-1)}_{N})=I(W^{(i)}_{N/2})^{2},\\
       &I(W^{(2i)}_{N})=2I(W^{(i)}_{N/2})-I(W^{(i)}_{N/2})^{2};
       \end{aligned}
       \right.
    \end{equation}
with $I(W^{(1)})=1-\epsilon$. The blue stars in Fig.~\ref{fig:IW} denote the information bits, whereas the red points denote the frozen bits.

   \begin{figure}[htb]
        \centering
            \includegraphics[width = 8 cm]{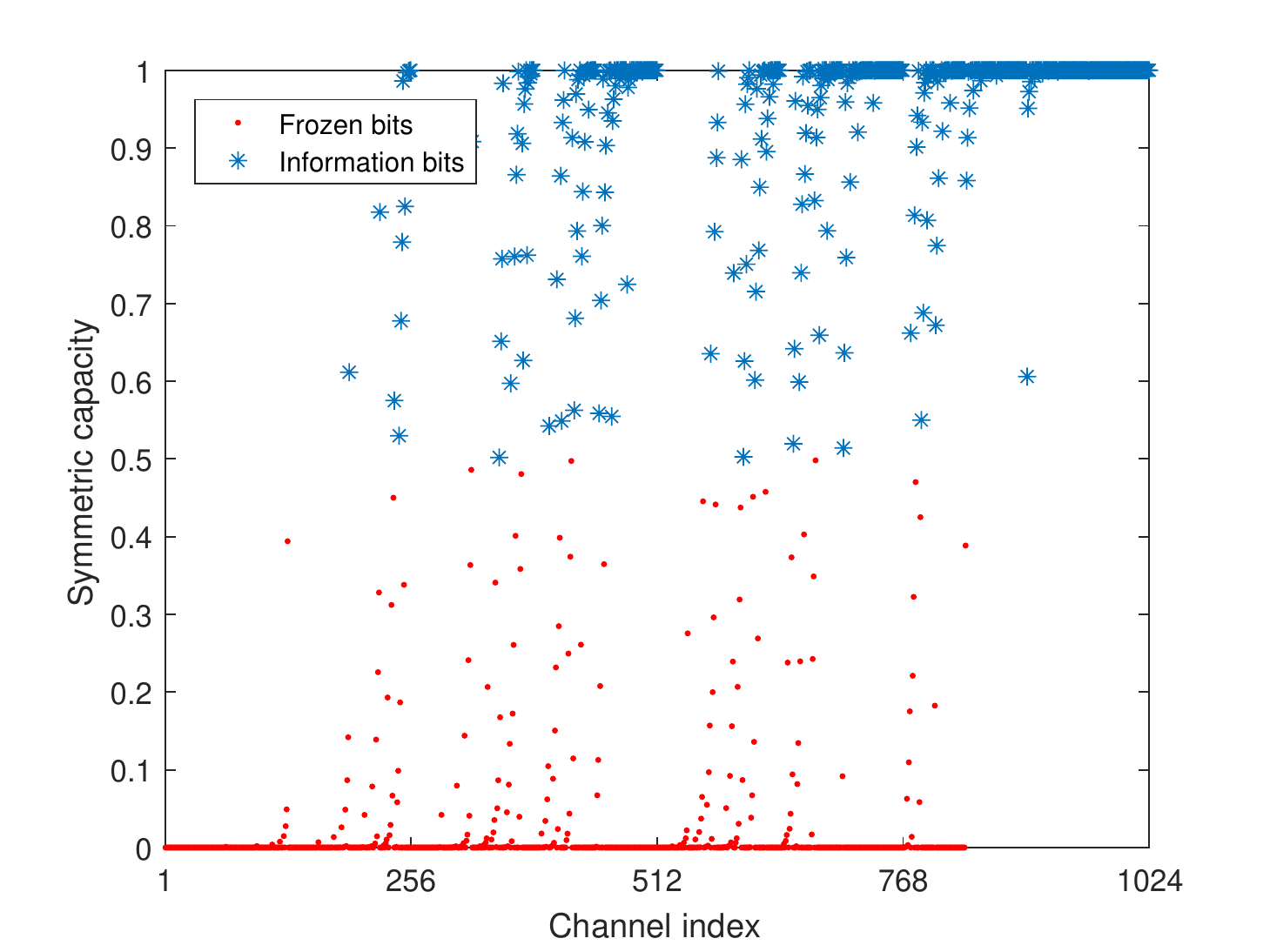}
        \caption{Channel polarization for a BEC with $\epsilon = 0.5.$}\label{fig:IW}
    \end{figure}

\subsection{Comparison of Different Segmented Schemes}\label{subsec:comp}
To the authors' best knowledge, there are two segmented CRC-aided SCL methods. The PSCL scheme proposed in~\cite{Hashemi2015Partitioned} aims to reduce memory consumption, and applies uniform partitions to code bits for implementation convenience. The hardware reduction comes at the cost of some performance loss compared to the conventional CA-SCL algorithm and always forces the number of candidate paths to $1$ after each CRC. The SCA-SCL scheme proposed in~\cite{Zhou2016Segmented} aims to reduce both the time and space complexity. Uniform segments are applied to information bits and CRC is employed as a tool to eliminate decoding redundancy without harming the performance.
    \begin{figure}[htb]
        \centering
            \includegraphics[width = 8 cm]{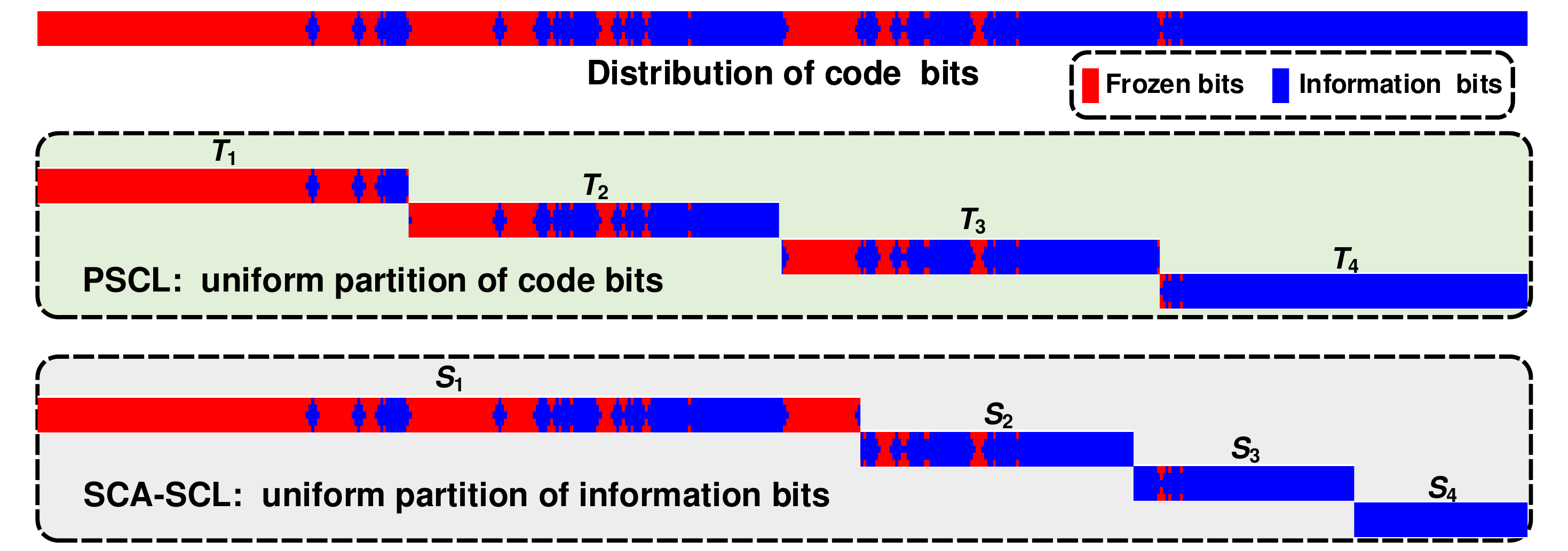}
        \caption{Different segmented decoding schemes with $P = 4$.}\label{fig:seg}
    \end{figure}

Let $P$ denote the number of segments. For PSCL decoder, the index set of Segment-$i$ is $T_i$ ($1 \leq i \leq P$). $|\cdot|$ denotes the cardinality of one set. We have
\begin{equation}\label{eq:pscl}
  \sum\nolimits_{i=1}^{P} {|T_i|}=N,~\textrm{and}~{|T_i|}=N/P.
\end{equation}
For SCA-SCL decoder, the index set of Segment-$i$ is $S_i$ ($1 \leq i \leq P$). we have
\begin{equation}\label{eq:scascl}
  \sum\nolimits_{i=1}^{P} {|S_i|}=N,~\textrm{and}~|{S_i} \cap \mathcal{A}|=K/P.
\end{equation}
One simple example of $P = 4$ is illustrated in Fig.~\ref{fig:seg}.

Theoretically, both schemes are similar but differ in the partition methods. PSCL decoding employs uniform code bit partition, which is implementation friendly. However, since only one candidate can survive after each CRC, small performance degradation is expected, especially in low SNR region. SCA-SCL decoding employs uniform information bit partition, which can keep the performance as CA-SCL decoding while successfully reducing the space and time complexity. This advantage comes from the decoding flexibility. However, the flexibility will make the implementation more complicated.

\subsection{PSCL with Early Termination}\label{subsec:c}
The first observation is that, both schemes apply the same CRC to the uniformly partitioned segments. Without looking into the symmetric capacity of each binary channel, this straightforward scheme may not be optimal. The second observation is that both schemes have their own merits, it would be smarter to merge them together. In other words, it is estimated that we can propose a new approach which is both implementation friendly and adaptive.

One simple mixture of both schemes is to introduce early termination to PSCL decoding. However, this simple combination may not be reasonable in certain cases. Fig.~\ref{fig:PSCL} gives an example with $P=4$. Shown in Fig.~\ref{fig:segc}, for the $(1024,512)$ code with $m=32$, the information lengths of four segments are $20$, $123$, $156$, and $245$, respectively. If uniform CRC bits are employed, the first segment has $|T'_1| = |T_1| - |C_1| = 12$ information bits and the last segment has $|T'_4| = |T_4| - |C_4| = 237$ information bits. It is unreasonable to use the same $8$-bit CRC to both the $12$-bit and $237$-bit segments. To this end, the TCA-SCL decoding is proposed in the following section.
    \begin{figure*}[htb]
        \centering
            \includegraphics[width = 15 cm]{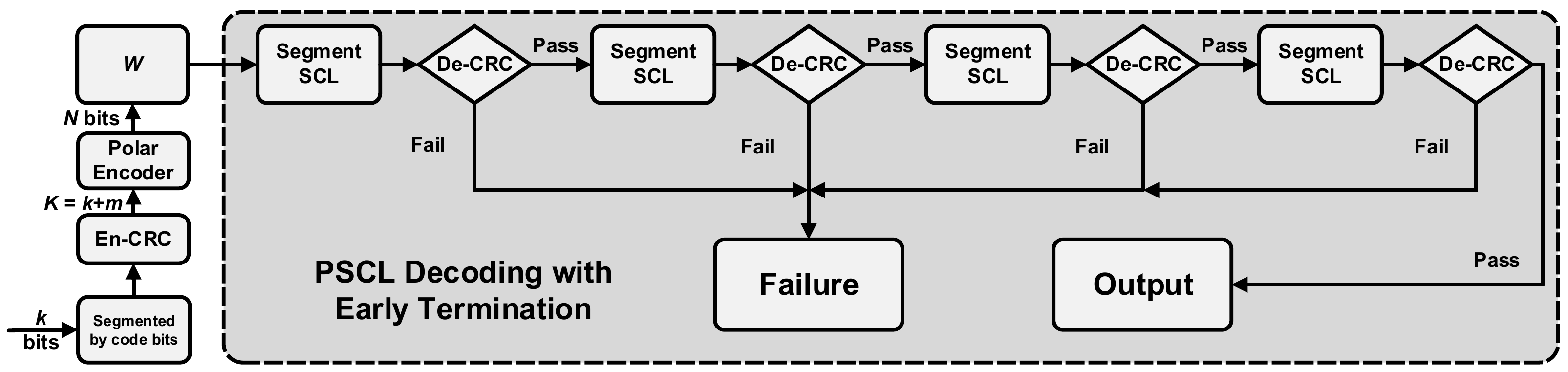}
        \caption{PSCL decoding with early termination.}\label{fig:PSCL}
    \end{figure*}
    \begin{figure*}[htb]
        \centering
            \includegraphics[width = 16 cm]{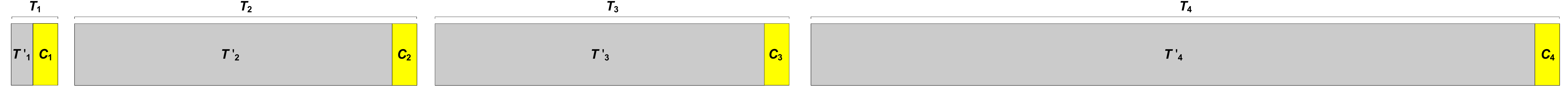}
        \caption{The CRC allocation of PSCL ($N=1024$, $K=512$).}\label{fig:segc}
    \end{figure*}

\section{CA-SCL Decoding with Tailored CRC}\label{sec:DC-SSCL}
In this section, we first discuss how to measure the requirement of CRC bits for different segments. Then the concepts of virtual transform and virtual length are introduced. A visualization method of polarized channel's symmetric capacity is also proposed. The detailed TCA-SCL decoding is finally proposed. It should be noted that though the TCA-SCL decoding is based on uniform partition of code bits, it can be readily applied to other uniform or nonuniform partition schemes.

\subsection{Requirement of CRC Length for Polar Codes}\label{subsec:crc}
Assume a total of $m$ CRC bits are available and are divided into $P$ segments $C_1, C_2, \ldots, C_P$. It may not be suitable to set $|C_1| = |C_2| = \ldots =|C_P|$ for $P$ segments with different lengths. How to measure the requirement of the CRC length for each segment is critical. To the authors' best knowledge, no literature has addressed this specific problem. To maintain the same error detection capability in the situation of independent channels, it is concluded that longer sequence requires more CRC bits~\cite{Koopman2004Cyclic}. However, this conclusion does not suit polar codes because the reliability of different channels are different. A reasonable measurement on requirement of CRC length should take both sequence length and symmetric capacity into account. In the following, concepts of virtual transform and virtual length are proposed to this end.

\subsection{Virtual Transform and Virtual Length}\label{subsec:vt}
Including CRC bits, we always pick the $K+m$ most reliable bits out of $N$ based on the symmetric capacity $I(W^{(i)}_{N})$ with $i \in \mathcal{A'}$. $\mathcal{A'}$ is the new information set including CRC bits, and $|\mathcal{A'}| = K+m$. Calculate $\bar I$ as follows:
\begin{equation}\label{average}
  {\bar I}=\frac{1}{K+m}{\sum\limits_{i \in \mathcal{A'}} {I(W^{(i)}_{N})}}.
\end{equation}

\begin{Def}\label{def:vt}
  \textbf{Virtual Transform} To operate the virtual transform, we first calculate $I'(i)$:
  \begin{equation}\label{eq:v1}
  I'(i) = \bar I/I(W^{(i)}_{N}).
  \end{equation}
  The virtual value of the channel is
\begin{equation}\label{eq:v2}
J(i) =
\left\{
{\begin{aligned}
&{1 +  \frac{(I'(i) - 1)}{2(1-\bar I)}   ,~{\rm{if}} ~ I'(i) \geq 1};\\
&{1 -  \frac{(1 - I'(i))}{2(1-\bar I)}   ,~{\rm{if}} ~ I'(i) < 1}.
\end{aligned}}
\right.
\end{equation}
\end{Def}

\begin{Def}\label{def:vl}
  \textbf{Virtual Length} The summation of $J(i)$ in the $k$-th segment is its virtual length:
  \begin{equation}\label{eq:v3}
  v{l_k} = \sum\limits_{i \in \{{{T_k} \cap {\mathcal{A'}}}\}} {J(i)}.
  \end{equation}
\end{Def}

The CRC allocation is given by
\begin{equation}\label{eq:v4}
|C_1|:\ldots :|C_P|= \textrm{adjust}\left(\frac{m \times {v{l_1}}}{{\sum\nolimits_{i = 1}^P {v{l_i}} }},\ldots,\frac{m \times {v{l_P}}}{{\sum\nolimits_{i = 1}^P {v{l_i}} }}\right),
\end{equation}
where $\textrm{adjust}(\cdot)$ is a function which adjusts the allocation results to near integers and takes the following steps: 1) find an unmarked $k$ which has minimum $|\textrm{ROUND}(\frac{m \times {v{l_k}}}{{\sum\nolimits_{i = 1}^P {v{l_i}} }})-\frac{m \times {v{l_k}}}{{\sum\nolimits_{i = 1}^P {v{l_i}} }}|$, then mark $k$ and set $|C_k|=\textrm{ROUND}(\frac{m \times {v{l_k}}}{{\sum\nolimits_{i = 1}^P {v{l_i}} }})$; 2) repeat step 1) for $(P-2)$ times; 3) Assume the left unmarked index is $k'$. Set $|{C_{k'}}| = m - \sum\limits_{i \ne k'} {|{C_i}|}$, where $1 \le i \le P$.

\subsection{Visualization of Channel Symmetric Capacity}\label{subsec:vis}
\begin{figure*} \centering
\subfigure[Visualization of symmetric capacity for code bits ($N=64$).] { \label{fig:sc:a}
\includegraphics[width = 16.6 cm]{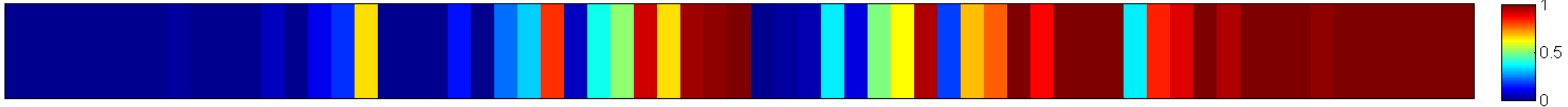}
}
\subfigure[Visualization of symmetric capacity for code bits ($N=1024$).] { \label{fig:sc:b}
\includegraphics[width = 16.55 cm]{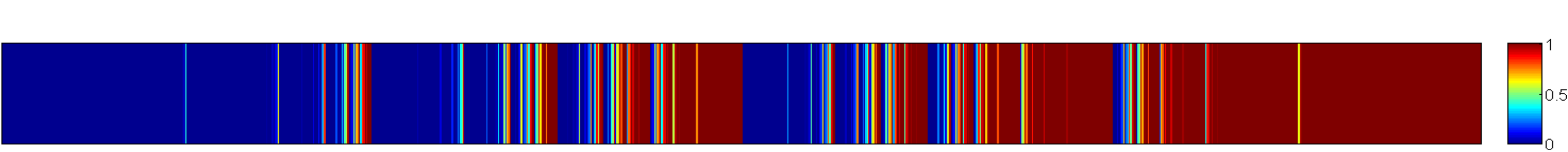}
}
\subfigure[Visualization of symmetric capacity for information bits ($N=1024$, $K=512$).] { \label{fig:sc:c}
\includegraphics[width = 16.55 cm]{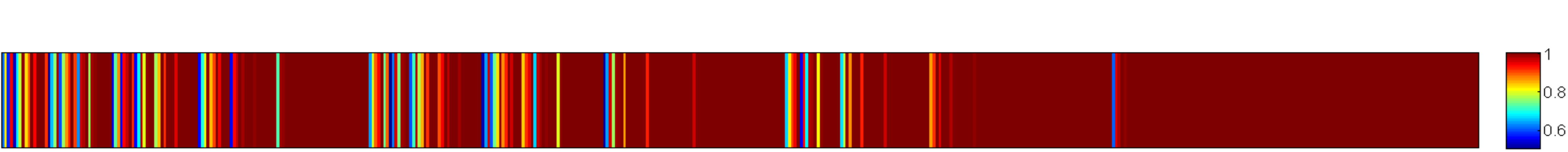}
}
\subfigure[Four segments of information bits with virtual lengths ($N=1024$, $K=512$).] { \label{fig:sc:d}
\includegraphics[width = 16.55 cm]{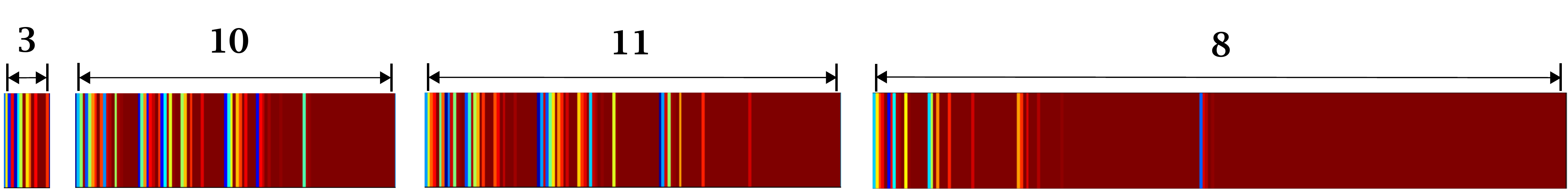}
}
\subfigure[The CRC allocation of TCA-SCL ($N=1024$, $K=512$).] { \label{fig:sc:e}
\includegraphics[width = 16.8 cm]{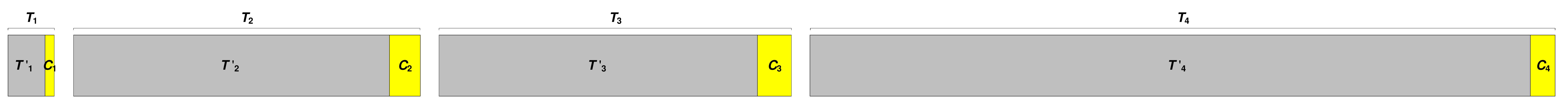}
}
\caption{Visualization illustration of symmetric capacity.}
\label{fig:ex}
\end{figure*}

Before we give more details of the proposed TCA-SCL decoding, one visualization method of symmetric capacity is proposed for easy understanding and illustration. In this visualization, the gradient colors from iridescence are used to demonstrate the symmetric capacity of each channel. According to the legend, the more symmetric capacity approaching $1$ ($0$), the more bathochromic (hypsochromic) it will be. Fig.~\ref{fig:sc:a} shows the visualization for polar codes with $N=64$.

\begin{Exam}\label{exam:1024}
  For $(1024,512)$ polar codes with $32$ CRC bits, visualization of code bits is given in Fig.~\ref{fig:sc:b}. The visualization of $512$ information bits is given in Fig.~\ref{fig:sc:c}. For TCA-SCL decoding, set $P=4$. According to Definition~\ref{def:vl}, the ratio of virtual lengths is:
  \begin{equation}\label{eq:vl:ratio}
    vl_1:vl_2:vl_3:vl_4 = 3.54:9.84:10.91:7.70,
  \end{equation}
  which is illustrated by Fig.~\ref{fig:sc:d}. Then the CRC allocation is obtained according to Eq.~(\ref{eq:v4}):
  \begin{equation}\label{eq:crc:ratio}
    |C_1|:|C_2|:|C_3|:|C_4| = 3:10:11:8.
  \end{equation}
  The refined CRC allocation based on virtual length is given in Fig.~\ref{fig:sc:e}.
\end{Exam}

\begin{Rem}\label{rem:vl}
  Generally speaking, the hypsochromic part in the visualization chart mainly contributes to the virtual length. The more hypsochromic segment requires more CRC bits.
\end{Rem}

This refined SCA-SCL decoding based on virtual length is named TCA-SCL decoding. Details of TCA-SCL decoding is given as follows. The corresponding performance and implementation are discussed in Section~\ref{sec:percom} and Section~\ref{sec:arch}.

\subsection{Tailored CA-SCL Decoding}\label{subsec:tca}
The detailed tailored CA-SCL decoding is given in this subsection. For TCA-SCL encoding, we set $P$ segments and perform the virtual transform to obtain the corresponding virtual lengths. Then we allocate the CRC bits according to the ratio of virtual lengths before polar encoding.

\begin{algorithm} 
\caption{TCA-SCL Polar Encoding} 
\label{alg:1} 
\begin{algorithmic}[1]
\REQUIRE $u^{N}_{1}$, $I(W^{(i)}_{N})$, $N$, $K$, $m$, $P$.
\STATE Set $P$ segments;
\STATE ${\bar I}=\frac{1}{K+m}{\sum\nolimits_{i \in \mathcal{A'}} {I(W^{(i)}_{N})}}$;
\FOR{$i=1$; $i<=N$; $i++$}
\STATE $I'(i) = \bar I/I(W^{(i)}_{N})$;
\IF {$I'(i) \geq 1$}
\STATE $J(i)=1 + \frac{(I'(i) - 1)}{2(1-\bar I)}$;
\ELSE
\STATE $J(i)=1 - \frac{(1 - I'(i))}{2(1-\bar I)}$;
\ENDIF
\ENDFOR
\FOR{$k=1$; $k<=P$; $k++$ }
\STATE $v{l_k} = \sum\nolimits_{i \in \{{{T_k} \cap {\mathcal{A'}}}\}} {J(i)}$;
\ENDFOR
\STATE $\textrm{addCRC}(u^{N}_{1},vl_1,vl_2,..,vl_P)$;
\STATE $x^N_1= \textrm{encoder}(u^N_1)$.
\ENSURE $x^{N}_{1}$.
\end{algorithmic}
\end{algorithm}

Here, $\textrm{addCRC}(\cdot)$ is function which performs Eq.~(\ref{eq:v4}). Function $\textrm{encoder}(\cdot)$ performs conventional polar encoding. For TCA-SCL decoding, SCL decoding with early termination is performed as follows. Here, the function $\textrm{SCL}'(\cdot)$ is the SCL decoding for Segment-$j$. Define $\mathcal{U}_i$ as the the output paths set of SCL($\cdot$) in $i$-th segment. Define $\textrm{passCRC}(\cdot)$ as the function which checks if at least one path of $\mathcal{U}_i$ can pass the CRC. If one or more than one path can pass the CRC, the path with the largest metric of them is chosen to refresh $\hat u^{N}_{1}$.

\begin{algorithm} 
\caption{TCA-SCL Polar Decoding} 
\label{alg:2} 
\begin{algorithmic}[1]
\REQUIRE $y^{N}_{1}$, $N$, $P$, $L$.
\FOR{$i=1$; $i<=P$; $i++$ }
\STATE $\mathcal{U}_i = \textrm{SCL}'(y^{N}_{1},i,L$);
\IF {$\textrm{passCRC}(\mathcal{U}_i)=\textrm{false}$}
\STATE \textbf{break};
\ENDIF
\STATE refresh $\hat u^{N}_{1}$ by the survival path in $\mathcal{U}_i$;
\ENDFOR
\ENSURE $\hat u^{N}_{1}$.
\end{algorithmic}
\end{algorithm}

\section{TCA-SCL Decoding with HARQ}\label{sec:arq}
Besides early termination, the proposed TCA-SCL decoding can also work in a HARQ way when segmented CRC fails. HARQ has been widely used in delay insensitive communication systems for a capacity-approaching throughput~\cite{hagenauer1988rate,rowitch2000performance,yue2007design}. Recently, HARQ has been considered for polar decoding. \cite{chen2013hybrid} introduced a HARQ scheme based on a class of rate-compatible polar codes constructed by performing punctures and repetitions using punctured polar coding~\cite{niu2013beyond}. An incremental redundancy HARQ (IR-HARQ) scheme via puncturing and extending of polar codes is proposed in \cite{saberincremental}. Both algorithms use punctured patterns to suit different rates. However, puncturing causes a performance loss and needs hybrid decoding schemes to remedy it with high complexity. And IR-HARQ scheme needs to retransmit frozen bits one by one after transmitting $K$ information bits. Therefore, the decoding complexity of IR-HARQ is $O(N^{2}\log N)$, which is high for a large $N$.

To overcome this issue, we give a HARQ-TCA-SCL scheme based on TCA-SCL decoding. When a segment decoding failure occurs, the system resends the specific segment and merges the new information bits with the old ones by maximum ratio combining (MRC). For different segments sharing the same SNR, decoder can apply linear superposition to obtain the average value. As the number of segment retransmission goes up, the noise power converge to zero, which helps to improve the performance effectively.
    \begin{figure}[htb]
        \centering
            \includegraphics[width = 8cm]{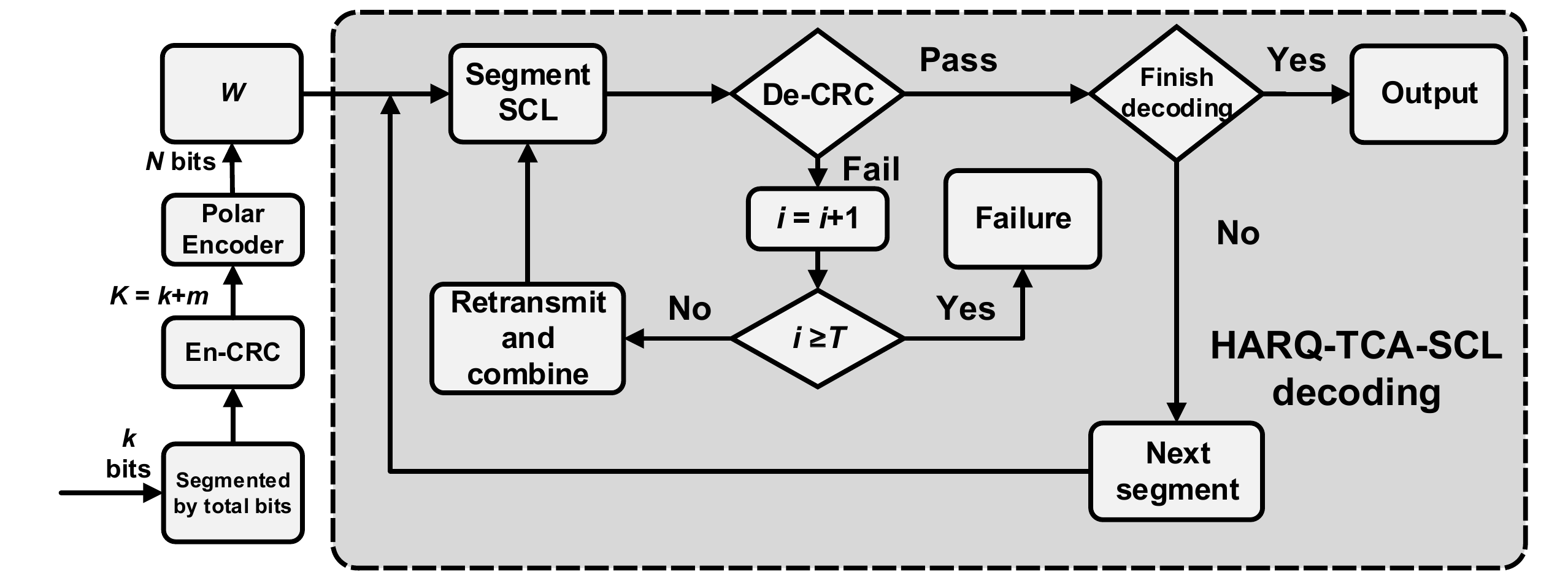}
        \caption{Proposed HARQ-TCA-SCL decoding scheme.}\label{fig:arq}
    \end{figure}

The proposed HARQ-TCA-SCL scheme is illustrated in Fig.~\ref{fig:arq}. Let $i$ denotes the current number of times a transmission attempted, $T$ denotes the maximum retransmission times, and $j$ ($\le P$) denotes the current position of the segments. The details of HARQ-TCA-SCL scheme are listed as follows:
\begin{algorithm} 
\caption{HARQ-TCA-SCL Polar Decoding} 
\label{alg1} 
\begin{algorithmic}[1]
\REQUIRE $y^{N}_{1}$, $T$, $P$, $L$
\STATE $i = 1$;
\FOR{$j=1$ to $P$}
\STATE $\textrm{mark} = \textrm{false}$;
\WHILE {$i < T$ and $\textrm{mark} = \textrm{false}$}
\STATE $\mathcal{U}_j =\textrm{SCL}'(y^{N}_{1},j,L)$;
\STATE $\textrm{mark} = \textrm{passCRC}(\mathcal{U}_j)$;
\IF {$\textrm{mark} = \textrm{false}$}
\STATE $i = i+1$;
\STATE Retransmit and combine Segment-$j$;
\ENDIF
\ENDWHILE
\IF {$\textrm{mark} = \textrm{false}$}
\STATE \textbf{break};
\ENDIF
\STATE refresh $\hat u^{N}_{1}$ by the survival path in $\mathcal{U}_j$;
\ENDFOR
\ENSURE $\hat u_{1}^{N}$
\end{algorithmic}
\end{algorithm}

We initialize $i = 1$ for the HARQ-TCA-SCL decoding, then perform the SCL decoding for Segment-$j$ (function $\textrm{SCL}'(\cdot)$) and obtain CRC results on each survival path at the end of segment SCL decoding. If at least one path can pass CRC, we save the path with the highest probability and move to the next segment. Otherwise, we update $i$ to $i+1$, combine Segment-$j$ with the retransmitted part and the old ones, redo the TCA-SCL decoding. Algorithm terminates with a decoding failure if $i = T$.

\section{Performance and Complexity Analysis}\label{sec:percom}
\subsection{Performance Analysis}\label{subsec:performance}
In this subsection, performance comparison between different algorithms is given with binary-input additive white \emph{Gaussian} noise channels (BI-AWGNCs). Different code lengths, rates, and partition schemes are considered: for Fig.~\ref{fig:fer64}, we have $N = 64$, $K = 36$, $m = 8$, and $P = 2$; for Fig.~\ref{fig:fer1024}, we have $N = 1024$, $K = 512$, $m = 32$, and $P = 4$. The information set $A$ is selected according to~\cite{Arikan2009Channel,tal2013construct}. We use corresponding hex value to represent CRC polynomial. For example, a CRC-$4$ detector with polynomial $g(D) = D^{4}+D+1$ is described as CRC-$4$ ($0$x$9$) in this paper (the `$+1$' is implicit in the hex value). For $(64,36)$ code, we set $2$ copies of CRC-$4$ ($0$x$9$) for (HARQ-)PSCL scheme, and CRC-$5$ ($0$x$12$) and CRC-$3$ ($0$x$5$) for (HARQ-)TCA-SCL scheme. For $(1024,512)$ code, we set $4$ copies of CRC-$8$ ($0$xA$6$) for (HARQ-)PSCL scheme, and CRC-$3$ ($0$x$5$), CRC-$10$ ($0$x$327$), CRC-$11$ ($0$x$583$), and CRC-$8$ ($0$xA$6$) for (HARQ-)TCA-SCL scheme. All the CRC detectors are with the best CRC generation polynomial suggested by~\cite{Koopman2004Cyclic}.

\begin{figure} \centering
\subfigure[($N=64$, $K=36$, $m=8$, $P=2$)] { \label{fig:fer64}
\includegraphics[width = 8 cm]{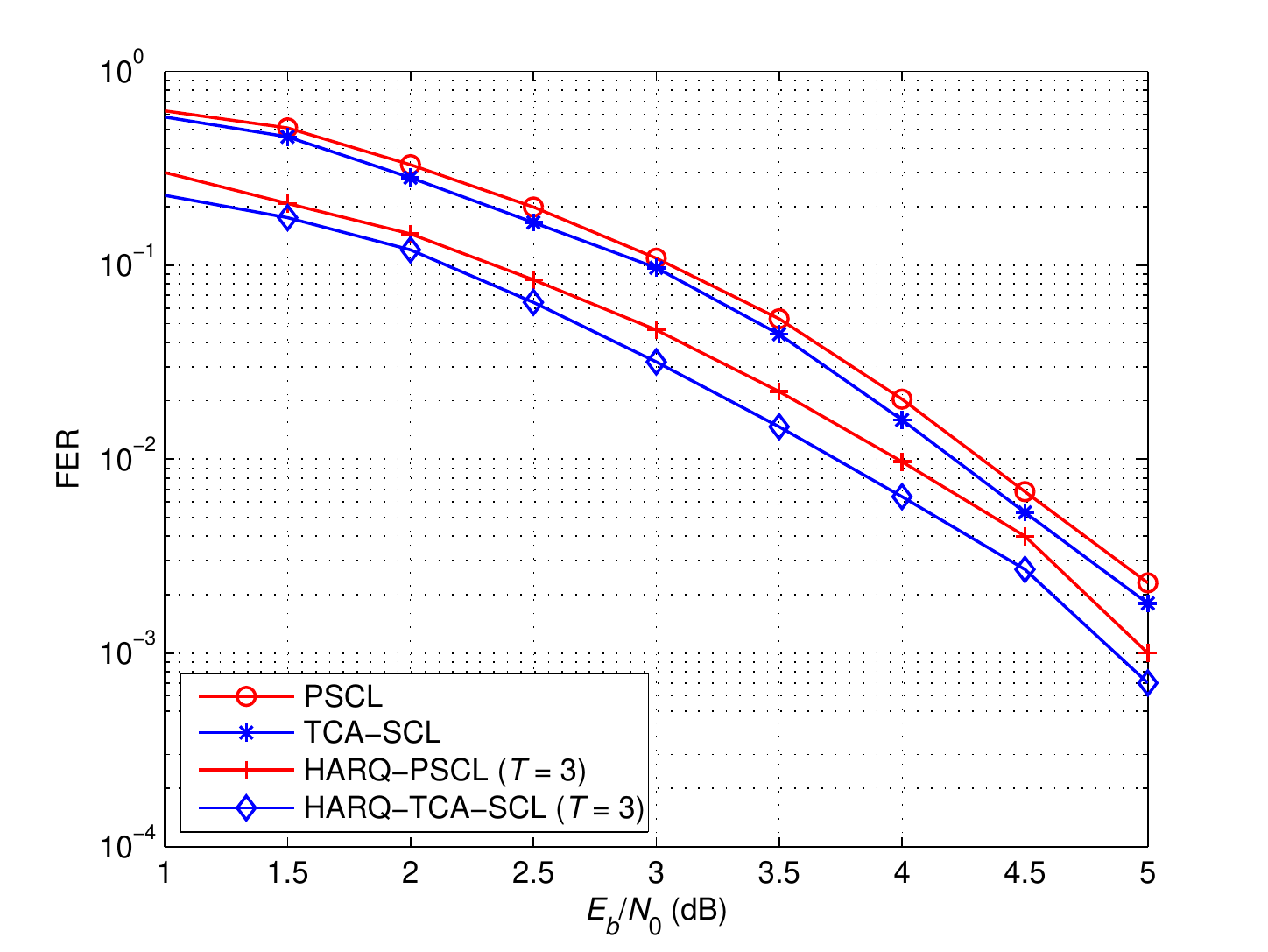}
}
\subfigure[($N=1024$, $K=512$, $m=32$, $P=4$)] { \label{fig:fer1024}
\includegraphics[width = 8 cm]{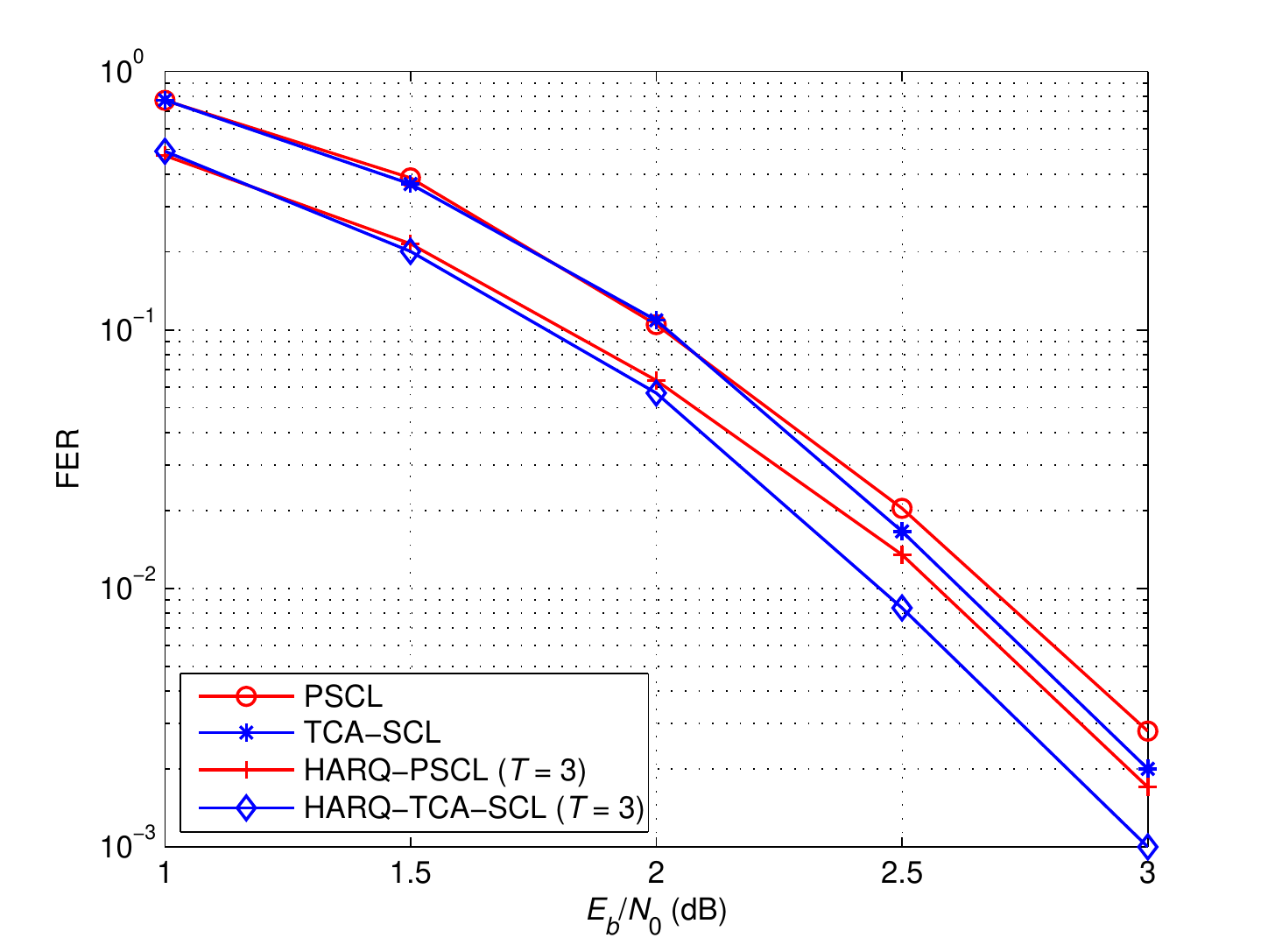}
}
\caption{FER comparison of (HARQ-)TCA-SCL and (HARQ-)PSCL schemes.}\label{fig:fer}
\end{figure}

According to Fig.~\ref{fig:fer}, compared with the PSCL scheme, the proposed TCA-SCL scheme has a $0.1$ dB performance gain when $\textrm{FER}=10^{-2}$ for both $(64,36)$ and $(1024,512)$ codes. The HARQ-TCA-SCL ($T = 3$) scheme introduces a $0.25$ dB and $0.13$ dB gain over the HARQ-PSCL scheme when frame error rate $\textrm{FER}=10^{-2}$ for $(64,36)$ and $(1024,512)$ codes, respectively.

\subsection{Complexity Analysis}\label{subsec:complexity}
Define the product of the actual decoding length and list size as the average list size. Since the average computational complexity is proportional to the average list size, here we analyze the average list sizes of TCA-SCL and HARQ-TCA-SCL decoders denoted by $\bar L_T$ and $\bar L_H$, respectively. Assume the total frame number is $F$, and the decoder ends at the $P_i$-th segment of the $i$-th frame. For the TCA-SCL decoder, $\bar L_T$ can be calculated as
\begin{equation}\label{eq:LT}
    {\bar L_T} = \frac{{L \times \sum\nolimits_{i = 1}^F {{P_i}} }}{{P \times F}}.
\end{equation}
Suppose the $i$-th frame is retransmitted $R_i$ times ($0 \le {R_i} \le T$). For the HARQ-TCA-SCL decoder, $\bar L_H$ is calculated as
\begin{equation}\label{eq:LH}
    {\bar L_H} = \frac{{L \times \sum\nolimits_{i = 1}^F {({P_i} + {R_i})} }}{{P \times F}}.
\end{equation}

For low SNR, thanks to the early termination $\bar L_T$ is small due to high error rate. On the other hand, a larger number of retransmissions leads to a higher $\bar L_H$ for HARQ-TCA-SCL decoder. As SNR increases, $\bar L_T$ and $\bar L_H$ converge to $L$: 1) TCA-SCL decoder is more likely to finish the decoding process, and 2) the retransmission time of HARQ-TCA-SCL decoder converges to $0$. It should be noted that, according to Eq.~(\ref{eq:LT}) and (\ref{eq:LH}) $0 \le {\bar L_H} - {\bar L_T} \le \frac{L}{P}T$.

\begin{figure} \centering
\subfigure[($N=64$, $K=36$, $m=8$, $P=2$)] { \label{fig:als64}
\includegraphics[width = 8 cm]{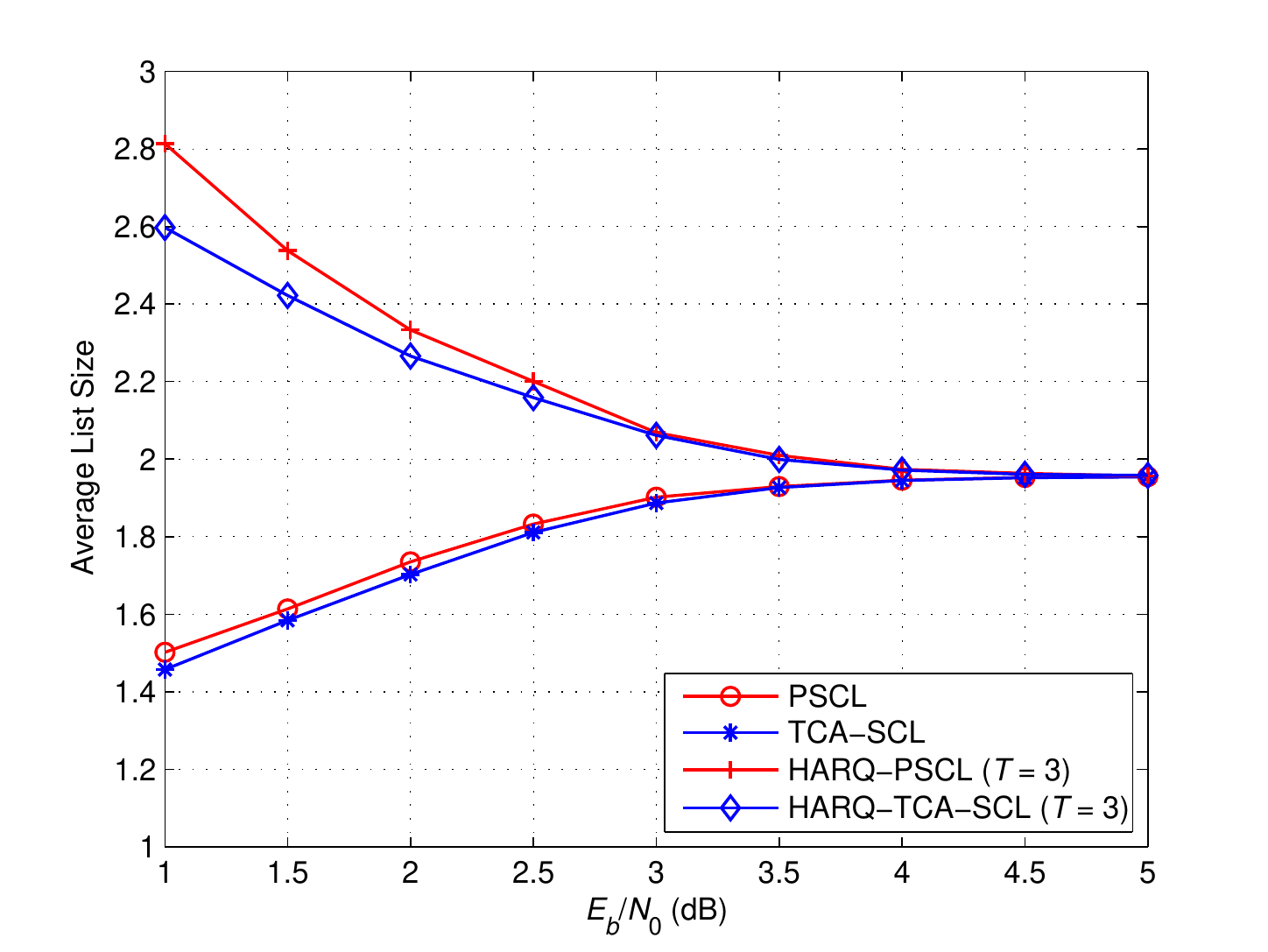}
}
\subfigure[($N=1024$, $K=512$, $m=32$, $P=4$)] { \label{fig:als1024}
\includegraphics[width = 8 cm]{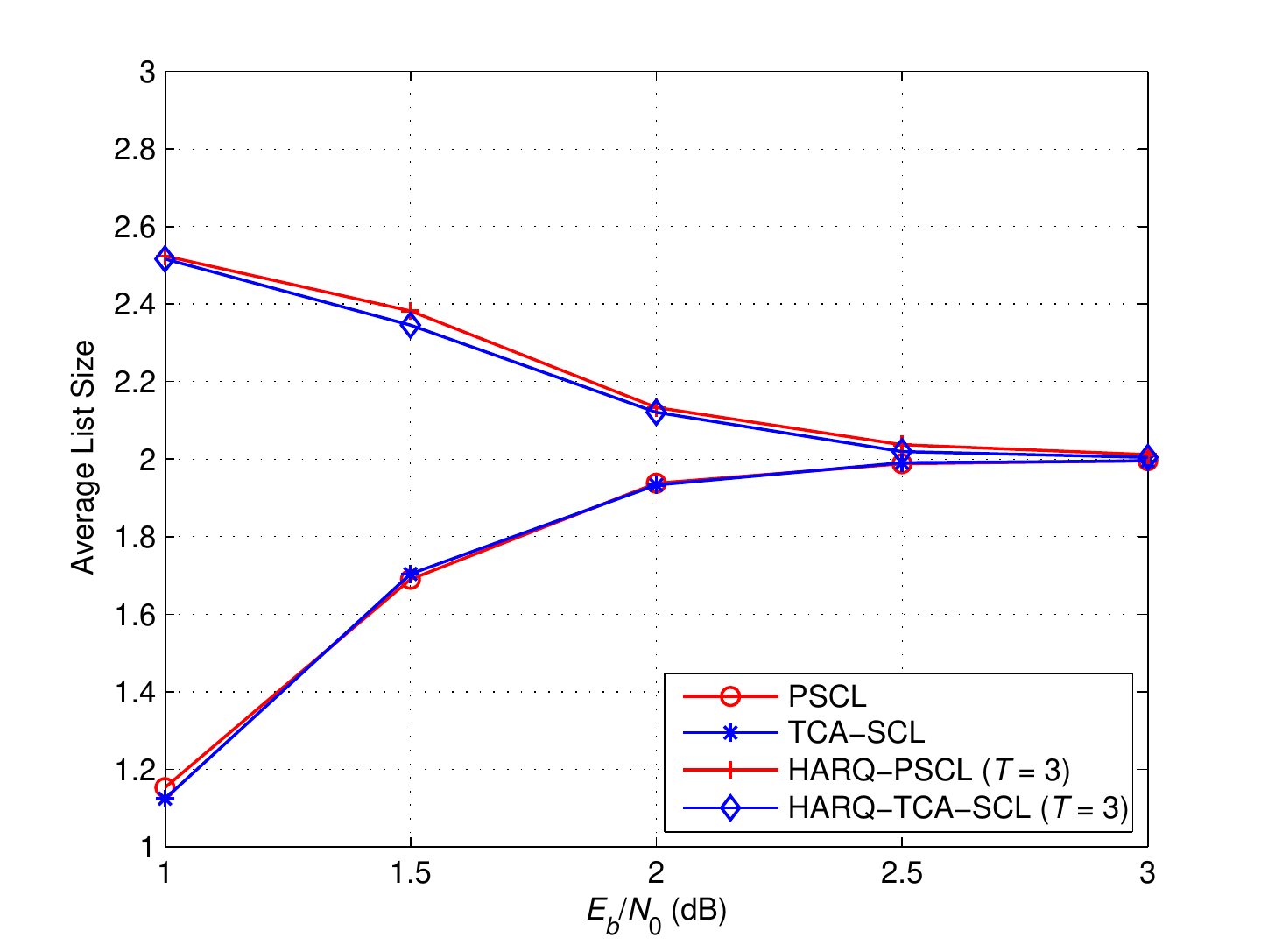}
}
\caption{Average list sizes of (HARQ-)TCA-SCL and (HARQ-)PSCL schemes.}\label{fig:als}
\end{figure}

Shown in Fig.~\ref{fig:als}, (HARQ-)TCA-SCL scheme has the same complexity as (HARQ-)PSCL scheme. The HARQ-TCA-SCL scheme has $50.3\%$ and $38.5\%$ higher complexity than the PSCL scheme at $\textrm{SNR}=1.5$ dB for $(64,36)$ and $(1024,512)$ codes, respectively. As SNR goes up, the complexity of HARQ-TCA-SCL scheme tends to be as same as the PSCL scheme asymptotically with better performance.

\section{Efficient TCA-SCL Decoder Architectures}\label{sec:arch}
To facilitate the application of the proposed TCA-SCL decoder, efficient architectures and FPGA implementations are proposed in this section and  are also given to demonstrate its merits. Since hardware consumption and decoding latency are two main concerns of SCL family decoder, the proposed architecture aims to achieve a good balance in between. The HARQ-TCA-SCL decoder can also be designed similarly.

\subsection{Hardware Consumption Analysis}\label{subsec:hardware}
\subsubsection{Full Module TCA-SCL Architecture}\label{subsubsec:arch1}
 \begin{figure*}[htb]
        \centering
            \includegraphics[width = 16cm]{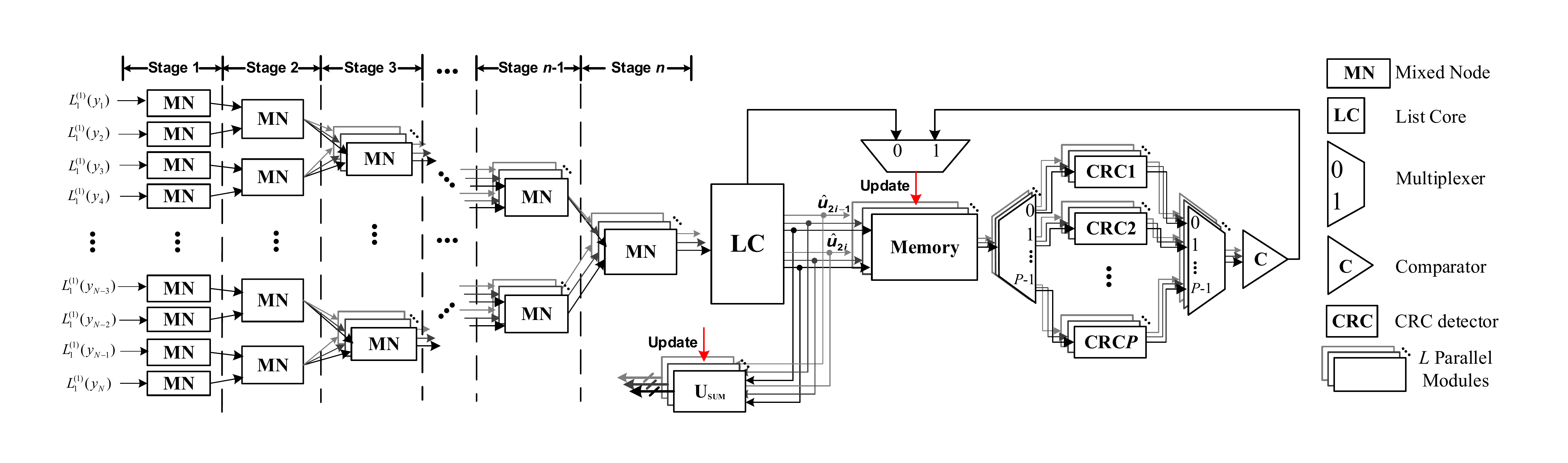}
        \caption{Architecture for full module TCA-SCL decoder.}\label{fig:adaf}
 \end{figure*}
 \begin{figure*}[htb]
        \centering
            \includegraphics[width = 16cm]{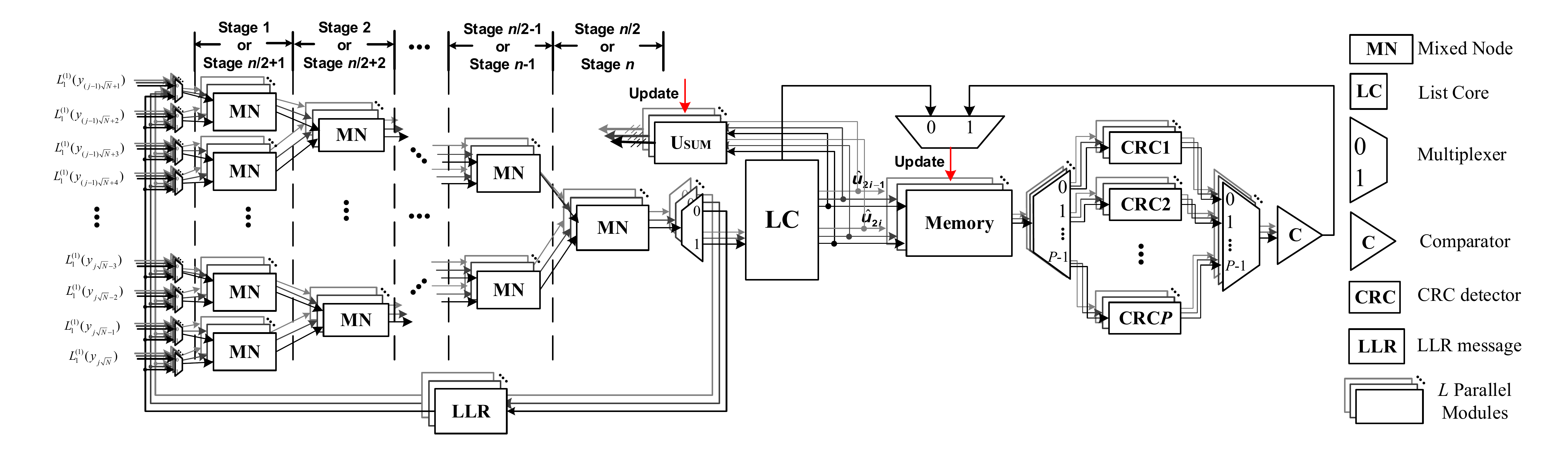}
        \caption{Architecture for folded TCA-SCL decoder ($n$ is even).}\label{fig:fsscl}
 \end{figure*}

In this subsection, a full module TCA-SCL architecture is proposed, which is mainly based on the conventional folded SC architecture proposed in~\cite{Zhang2011Reduced}. The architecture for full module TCA-SCL decoder is illustrated in Fig.~\ref{fig:adaf}. It divides all mixed node modules (MNs) into $n=\log_2 N$ stages, and each MN implements two types of calculations mentioned in Eq.~(\ref{eqn_dbl_x}). According to the conclusions in~\cite{Zhang2011Reduced}, for an $N$-bit SC decoder, $(N-1)$ MNs are required. For an $N$-bit CA-SCL decoder, $L(N-1)$ MNs are employed.

\begin{Thm}\label{thm:full:mn}
  For one $P$-segmented TCA-SCL decoder with list $L$, the total number of MNs is
  \begin{equation}
  \textstyle
    \mathrm{MN}_\mathrm{total}=N-L+(L-1)\frac{N}{P}.
  \end{equation}
\end{Thm}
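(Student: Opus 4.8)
The plan is to read the module count directly off the layered structure of the full-module architecture (Fig.~\ref{fig:adaf}) and reduce the claim to elementary arithmetic on a binary tree. First I would recall the two baseline facts imported from~\cite{Zhang2011Reduced}: a single SC decoder for block length $N$ uses $N-1$ MNs, laid out along the $n=\log_2 N$ stages that mirror the decoding tree, and a plain CA-SCL decoder is just $L$ full replicas of that tree, hence $L(N-1)$ MNs. The entire content of the theorem is that segmentation lets almost all of those replicas be deleted, so the proof is really an accounting of which MNs must be duplicated and which may be shared.

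Next I would use the uniform code-bit partition into $P$ segments (assuming $P$ and $N/P$ are powers of two, so the split is clean) to decompose the decoding tree geometrically. The top $\log_2 P$ levels form a shared \emph{trunk}, which is itself a full binary tree whose $P$ leaves are the segment roots and therefore contains $P-1$ internal MNs; the bottom $\log_2(N/P)$ levels form $P$ disjoint \emph{segment subtrees}, each a full SC tree on $N/P$ leaves and therefore holding $N/P-1$ MNs. As a consistency check the pieces recombine correctly: $(P-1)+P\bigl(\tfrac{N}{P}-1\bigr)=(P-1)+(N-P)=N-1$.

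The decisive step is the architectural argument fixing how many copies of each region the full-module decoder instantiates, and I would derive it from the segmented dataflow of Algorithm~\ref{alg:2}. At any instant only the current segment is being list-expanded, and every CRC boundary prunes the surviving candidates back to a single path before the next segment begins; consequently the trunk and the $P-1$ inactive segment subtrees are only ever exercised by one path and need a single physical instance, while the one active segment subtree must carry all $L$ paths. The decoder therefore realizes one complete SC tree ($N-1$ MNs, serving the base path across every segment) together with $L-1$ extra copies of a single segment subtree ($N/P-1$ MNs each). Summing gives $\mathrm{MN}_{\mathrm{total}}=(N-1)+(L-1)\bigl(\tfrac{N}{P}-1\bigr)$, which expands to $N-L+(L-1)\tfrac{N}{P}$, the claimed expression.

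I expect the main obstacle to be this third step rather than the counting. The arithmetic (the geometric-series counts for the trunk and the subtrees, and the final expansion) is routine; the work is in justifying rigorously that list replication can be confined to exactly one segment subtree — i.e. that the trunk outputs and the inactive subtrees are genuinely shared across all $L$ candidate paths. That sharing hinges on the per-segment CRC pruning to a unique survivor, so if the intended architecture instead allowed multiple survivors to cross a segment boundary, the trunk or neighboring subtrees would require replication and the formula would change; pinning down this invariant from the figure and Algorithm~\ref{alg:2} is where the proof earns its keep.
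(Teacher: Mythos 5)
Your proof lands on the right formula and rests on the same key insight as the paper's: because each CRC boundary prunes the list back to a single survivor, only the hardware serving the currently active segment needs $L$-fold replication, while everything crossed only at segment boundaries can be shared. The paper's own proof splits the hardware by stage — Stages $1$ to $\log_2 P$ form the shared part with $N - N/P$ MNs operated in plain SC mode, and Stages $\log_2 P + 1$ to $n$ contribute $N/P - 1$ MNs replicated $L$ times — giving $(N - N/P) + L\left(\frac{N}{P}-1\right)$. Your grouping, one full SC decoder plus $(L-1)$ extra copies of a single segment subtree, i.e.\ $(N-1)+(L-1)\left(\frac{N}{P}-1\right)$, is algebraically identical, so the conclusion stands. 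One bookkeeping slip is worth flagging: in the architecture of~\cite{Zhang2011Reduced} an MN is a processing element in a stage-parallel layout, with stage $i$ holding $N/2^{i}$ MNs so that an entire length-$N/2^{i}$ LLR vector is produced per activation; it is not one unit per internal node of the decoding tree. Consequently the shared \emph{trunk} contains $N - N/P$ MNs, not the $P-1$ internal tree nodes you counted, and your identity $(P-1)+P\left(\frac{N}{P}-1\right)=N-1$ is a tree-node tally that happens to total the same $N-1$ but partitions it differently from the hardware. This does not damage your final sum, because the only two quantities you actually use — $N-1$ for a full SC decoder and $\frac{N}{P}-1$ for one copy of the replicated bottom stages — are also correct under the architecture's stage-wise counting.
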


\begin{proof}\label{proof:full:mn}
  For the given decoder, its MNs can be categorized into two parts. The first part includes Stages $1$ to $\log_2 P$. The second part includes Stages $(\log_2 P+1)$ to $n$. It should be noted that since $N$ is power of $2$, $\log_2 P$ is always an integer.

  Since each segment outputs only one candidate, the first part obeys SC decoding rule, and list size $L$ is not necessary. The number of MNs is
  \begin{equation}\label{eq:full:mn:1}
  \textstyle
    \textrm{MN}_1=\sum\nolimits_{i=1}^{\log_2 P} N/{2^{i-1}}=N-\frac{N}{P}.
  \end{equation}

  The second part obeys CA-SCL decoding rule without considering the fine-gain scheduling. The number of MNs is
  \begin{equation}\label{eq:full:mn:2}
  \textstyle
    \textrm{MN}_2=L\sum\nolimits_{i=\log_2 P+1}^{n} N/{2^{i-1}}=L(\frac{N}{P}-1).
  \end{equation}
\end{proof}

\begin{table*}[ht]
    \tabcolsep 1mm
    \renewcommand{\arraystretch}{1.2}
    \footnotesize
    \caption{Implementation Analysis for Different Schemes}
    \begin{center}
    \begin{tabular}{c || c | c | c | c | c | c | c }
    \Xhline{1.0pt}
    \multirow{2}{*}{\textbf{Schemes}} &\multicolumn{2}{c|}{\multirow{2}{*}{Mixed node \#}}& \multicolumn{4}{c|}{Memory (bit)} & \multirow{2}{*}{Latency (clock cycles)}\\
    \cline{4-7}&\multicolumn{2}{c|}{} & \multicolumn{2}{c|}{LLRs} & \multicolumn{2}{c|}{Outputs} & \\
    \hline
     \textbf{CA-SCL}                 & $(N-1)L$                       &{\color{red}{$2046$}}&$(N-1)Lq$& {\color{red}{$2046q$}} &$(K+m)L$ &{\color{red}{$1088$}}&$T_\mathrm{CA}$  \\
     \textbf{TCA-SCL (SF)} & $N-L+\frac{(L-1)N}{P}$ &{\color{red}{$1278$}}&$(N-L+\frac{(L-1)N}{P})q$& {\color{red}{$1278q$}} &$(K+m)L$ &{\color{red}{$1088$}}&$T_\mathrm{CA}+(T_1+...+T_{(P-1)})L$  \\
     \textbf{TCA-SCL (DF)}  & $N-L+\frac{(L-1)N}{P}$ &{\color{red}{$1278$}}&$(N-L+\frac{(L-1)N}{P})q$& {\color{red}{$1278q$}} &$2(K+m)L$&{\color{red}{$2176$}}&$T_\mathrm{CA}+P\log_2 P -2P +2$ \\
     \textbf{FTCA-SCL (SF)}& $(\sqrt{N}-1)L$                &{\color{red}{$62$}}&$(N-L+\frac{(L-1)N}{P})q$&{\color{red}{$1278q$}}&$(K+m)L$ &{\color{red}{$1088$}}&$T_\mathrm{CA}+F+(T_1+...+T_{(P-1)})L$  \\
     \textbf{FTCA-SCL (DF)}& $(\sqrt{N}-1)L$                &{\color{red}{$62$}}&$(N-L+\frac{(L-1)N}{P})q$&{\color{red}{$1278q$}}&$2(K+m)L$&{\color{red}{$2176$}}&$T_\mathrm{CA}+F+P\log_2 P -2P +2$  \\
    \Xhline{1.0pt}
    \end{tabular}\label{tab:consult}
    \end{center}
    \end{table*}

Since the memory block corresponds to MNs, the memory complexity is as follows

\begin{Cor}\label{cor:full:mn}
  Assume the quantization length for the LLR message is $q$, the memory bits required are
  \begin{equation}
  \textstyle
    \mathrm{mem}_\mathrm{total}=q \times \mathrm{MN}_\mathrm{total}=q\left(N-L+(L-1)\frac{N}{P}\right).
  \end{equation}
\end{Cor}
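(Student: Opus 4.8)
The plan is to prove the count by a direct additive decomposition of the decoder's mixed nodes according to the $n=\log_2 N$ pipeline stages, leaning on the two baseline facts quoted just before the statement: a plain folded SC decoder needs $N-1$ MNs, one per internal tree node, distributed so that stage $i$ holds $N/2^{i}$ of them (so that $\sum_{i=1}^{n} N/2^{i}=N-1$), and a plain CA-SCL decoder merely replicates this $L$-fold to $L(N-1)$. Because $N$ and $P$ are both powers of two, $\log_2 P$ is an integer, so I can cleanly cut the stage index set into a first block $\{1,\dots,\log_2 P\}$ and a second block $\{\log_2 P+1,\dots,n\}$, count the MNs in each block separately, and recognize each count as a finite geometric series.

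First I would argue that the first block carries no list. Since each segment emits exactly one surviving candidate once its CRC selects the best path, the computations assigned to this block ever see only a single path and therefore obey the ordinary SC recursion rather than the list recursion; hence one MN per node suffices there, and the block total is $\sum_{i=1}^{\log_2 P} N/2^{i}=N-\frac{N}{P}$. For the second block the full width-$L$ list is still alive, so each node must be instantiated $L$ times, giving $L\sum_{i=\log_2 P+1}^{n} N/2^{i}=L\!\left(\frac{N}{P}-1\right)$, where the inner sum is most cheaply obtained as the difference of the full sum $N-1$ and the first-block sum already computed. Adding the two blocks yields $N-\frac{N}{P}+L\!\left(\frac{N}{P}-1\right)=N-L+(L-1)\frac{N}{P}$, which is exactly the claimed expression.

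I expect the substantive difficulty to lie entirely in the second step, namely justifying precisely where the list does and does not need to be replicated. The geometric summations are routine, but the boundary between the SC region and the CA-SCL region must be read off the folded datapath of \cite{Zhang2011Reduced} together with the per-segment single-survivor property, and one has to confirm that it is exactly the first $\log_2 P$ stages that collapse to SC while the remaining $n-\log_2 P$ stages keep all $L$ paths. A careful statement of this dataflow claim — including the implicit assumption that the per-segment hardware is reused, so that only one segment's worth of list nodes is physically instantiated — is what really carries the argument; once it is in place, the arithmetic closes immediately.
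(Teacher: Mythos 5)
Your proposal is a correct (and carefully done) derivation of the mixed-node count $\mathrm{MN}_\mathrm{total}=N-L+(L-1)\frac{N}{P}$, and it follows essentially the same route as the paper's proof of Theorem~\ref{thm:full:mn}: split the stages at index $\log_2 P$, argue that the first block sees only the single per-segment survivor and hence needs no list replication while the second block carries all $L$ paths, and sum the two geometric series. (Your stage populations $N/2^{i}$ are in fact the consistent choice, since they sum to $N-1$ over all $n$ stages; the paper's displayed exponent $N/2^{i-1}$ is an off-by-one slip that your version silently corrects.)

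However, the statement you were asked to prove is the Corollary, not the Theorem, and its only content beyond Theorem~\ref{thm:full:mn} is the step $\mathrm{mem}_\mathrm{total}=q\times\mathrm{MN}_\mathrm{total}$ --- i.e., the claim that the LLR memory is in one-to-one correspondence with the mixed nodes, each MN requiring exactly one $q$-bit word to buffer its output LLR. Your proposal never mentions $q$ or memory at all; it stops at the node count. The paper dispatches this remaining step in a single sentence (``the memory block corresponds to MNs''), so the gap is small and easily repaired, but as written your argument establishes the prerequisite theorem rather than the corollary itself. To close it, you would add: since each instantiated MN must store its computed LLR for reuse by the subsequent stage, and each LLR is quantized to $q$ bits, the total memory is $q$ bits per MN, giving $q\left(N-L+(L-1)\frac{N}{P}\right)$.
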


The list core (LC) module in Fig.~\ref{fig:adaf} mainly implements the sorting operation. In order to reduce both the sorting latency and complexity, the efficient distributed sorting (DS) proposed in~\cite{2016Liang} is employed here.

\subsubsection{Folded Module TCA-SCL Architecture}\label{subsubsec:arch2}
Thanks to the early termination scheme, the proposed full module architecture for TCA-SCL decoding is memory efficient compared to conventional CA-SCL decoding. However, the hardware utilization ratio (HUR) of MNs is very low. Borrowing the fine-folding idea proposed in~\cite{2016Liang,2016Liang2}, this paper then proposes the folded module TCA-SCL architecture for higher HUR. We set up a sub-decoder with $(2^{\lceil n/2 \rceil}-1)L$ MNs for Stage $1$ to $\lceil n/2 \rceil$. Stage $({\lceil n/2 \rceil}+1)$ to $n$ can also be implemented by this sub-decoder in a time-multiplexing manner. Fig.~\ref{fig:fsscl} gives an example of a even $n$, Stage $1$ and Stage $n/2+1$, Stage $2$ and Stage $n/2+2$, ..., Stage $n/2$ and Stage $n$ are time-multiplexing. If $n$ is odd, Stage $1$ and Stage $(n+1)/2+1$, Stage $2$ and Stage $(n+1)/2+2$, ..., Stage $(n+1)/2-1$ and Stage $n$ are time-multiplexing, and Stage $(n+1)/2$ uses the last stage alone. Parameter $j$ in Fig.~\ref{fig:fsscl} denotes the current folding order. However, the characteristics in Section~\ref{subsubsec:arch1} which helps to reduce the complexity of the first $\log_2 P$ stages could not be employed here, because folding technique is based on uniform hardware. The complexity is
\begin{Thm}
  For one folded module TCA-SCL decoder with list $L$, the total number of MNs is
  \begin{equation}
    {\rm{MN}}_{\rm{total}} = (2^{\lceil n/2 \rceil}-1)L.
  \end{equation}
\end{Thm}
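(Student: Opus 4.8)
The plan is to reduce the count to a single physical sub-decoder and then apply the SC-decoder MN-counting rule of~\cite{Zhang2011Reduced}. First I would observe that, by the fine-folding construction borrowed from~\cite{2016Liang,2016Liang2}, the folded module TCA-SCL decoder instantiates exactly one sub-decoder covering stage-positions $1$ through $\lceil n/2\rceil$, and Stages $(\lceil n/2\rceil+1)$ through $n$ are realized on the very same hardware in a time-multiplexed fashion. Time-multiplexing reuses MNs rather than adding them, so the total MN count equals the MN count of the sub-decoder alone; this is the pivotal observation and everything else is bookkeeping.

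Next I would count the MNs in the sub-decoder, which is structured as an SC(L) decoder over $M=2^{\lceil n/2\rceil}$ bits: at stage-position $k$ the number of processing nodes per path halves, giving $M/2^{k}$ MNs for $k=1,\dots,\lceil n/2\rceil$. Summing the geometric series telescopes,
\begin{equation}
\sum_{k=1}^{\lceil n/2\rceil} \frac{M}{2^{k}} = M\left(1-\frac{1}{M}\right)=M-1=2^{\lceil n/2\rceil}-1,
\end{equation}
which reproduces the $(M-1)$-MN rule for an $M$-bit SC decoder. Because folding is built on uniform hardware, the SC-only savings that the full-module architecture enjoyed in its first $\log_2 P$ stages are unavailable here (as already noted in the text), so every stage-position carries the full list multiplicity $L$; I would therefore multiply the per-path count by $L$ uniformly to obtain $\mathrm{MN}_{\mathrm{total}}=(2^{\lceil n/2\rceil}-1)L$, matching the claim.

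The hard part will be justifying that the fine-folding is actually resource-feasible, i.e., that the single $M$-bit sub-decoder really does suffice to host every original stage. I would argue this in two pieces. For the inter-stage folding, position $k$ is shared between Stage $k$ and Stage $(\lceil n/2\rceil+k)$, with the lone middle stage handled separately when $n$ is odd; since the per-stage MN demand halves with the stage index, the later stage never demands more MNs than the earlier one it shares hardware with, so sizing each position for the earlier stage suffices. For the intra-stage folding, the $N/2^{k}$ nodes of a full stage are processed in consecutive time-multiplexed batches through the $M/2^{k}$ physical MNs at position $k$; one checks that the folding factor is uniform across positions, a consequence of $N=M^{2}$ up to the ceiling in the exponent, so that no position overflows. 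Establishing this uniformity together with a compatible data-dependency schedule is the only nontrivial step; the counting itself is immediate once feasibility is granted.
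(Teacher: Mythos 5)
Your proposal is correct and follows essentially the same route as the paper: both arguments reduce the count to the single physical sub-decoder of $(2^{\lceil n/2\rceil}-1)$ MNs per path, multiply by $L$ because the uniform folding hardware cannot exploit the SC-only savings of the first $\log_2 P$ stages, and observe that time-multiplexing Stages $(\lceil n/2\rceil+1)$ to $n$ onto the same hardware adds no MNs (the paper phrases this as counting which MNs are idle in the first phase and fully occupied in the second). The only quibble is your phrase ``sizing each position for the earlier stage suffices,'' which reads backwards --- position $k$ holds $M/2^{k}$ MNs, exactly the demand of the \emph{later} Stage $\lceil n/2\rceil+k$, while the earlier Stage $k$ is batched onto it --- but your subsequent intra-stage paragraph makes the intended accounting clear.
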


\begin{proof}
  When implementing Stage $1$ to $\lceil n/2 \rceil$, all the input and output multiplexers choose mode `$0$'. $2^{\lceil n/2 \rceil + 1}$ executions are required to output $2^{\lceil n/2 \rceil + 1} L$ LLRs for Stage $\lceil n/2 \rceil$. For $P$-segmented decoder, if $\log_2 P \geq \lceil n/2 \rceil$, $(2^{\lceil n/2 \rceil}-1)(L - 1)$ MNs are idle during this decoding stage. Otherwise, according to Eq.~(\ref{eq:full:mn:1}), $(2^{\lceil n/2 \rceil}  - \frac{2^{\lceil n/2 \rceil}}{P})(L - 1)$ MNs are idle. Therefore, $(2^{\lceil n/2 \rceil}-1)L$ MNs are sufficient.

  When implementing Stage $({\lceil n/2 \rceil}+1)$ to $n$, all the input and output multiplexers choose mode `$1$'. Since $2^{\lceil n/2 \rceil + 1} L$ LLRs become the input of the sub-decoder, no MN is idle during this stage. Therefore, the total number of MNs is $(2^{\lceil n/2 \rceil}-1)L$.
\end{proof}

\begin{Thm}\label{thm:latency}
  Assume the quantization length for the LLR message is $q$, the memory bits required by the folded module TCA-SCL architecture is\\
  $\textstyle
    {\rm{me}}{{\rm{m}}_{{\rm{total}}}} = q\left(N-L+(L-1)\frac{N}{P}\right)$.
\end{Thm}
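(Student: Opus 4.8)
The plan is to show that the fine-folding schedule of the preceding theorem reuses the arithmetic units (the MNs) in time but does not reduce the amount of LLR data that must be held simultaneously, so the LLR memory is inherited unchanged from the full-module decoder. Concretely, I would prove that $\mathrm{mem}_\mathrm{total}$ for the folded architecture equals the full-module value established in Corollary~\ref{cor:full:mn}, namely $q\left(N-L+(L-1)\frac{N}{P}\right)$, even though the MN count has been driven down to $(2^{\lceil n/2\rceil}-1)L$.

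First I would recall the memory--MN correspondence underlying Corollary~\ref{cor:full:mn}: in the full-module design each LLR word is attached to exactly one MN, so the LLR memory is $q$ times the MN count. The key point is that this is really a count of the intermediate channel values $L_N^{(i)}$ that appear in the recursion of Eq.~(\ref{eqn_dbl_x}); it is a property of the decoding dataflow, not of how many physical processing units realise that dataflow. Folding changes the latter but not the former.

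Next I would argue that the folding schedule from the previous theorem---mapping Stage $k$ and Stage $\lceil n/2\rceil+k$ onto the same sub-decoder of $(2^{\lceil n/2\rceil}-1)L$ MNs---preserves this dataflow exactly. When the shared sub-decoder is reused for the upper half of the stages, the LLRs produced during the lower half are still live: they are consumed later by the $g$-function combining and by the leaf-level hard decisions, so they cannot be overwritten and must occupy their own memory words. Hence no LLR storage is released by the time-multiplexing, and the set of simultaneously-stored LLR words is identical to the full-module case. Combining this with the correspondence of the previous paragraph yields $\mathrm{mem}_\mathrm{total}=q\left(N-L+(L-1)\frac{N}{P}\right)$, which is exactly the claim.

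The main obstacle I expect is this middle step: rigorously justifying that folding cannot free any LLR memory early, i.e. proving the invariance of the live-LLR set under the schedule, rather than settling for a bare counting identity. This needs a lifetime analysis of each intermediate LLR in the SC/SCL recursion, confirming that every value survives until after its stage has been folded away; the count $N-L+(L-1)\frac{N}{P}$ then emerges as the peak number of concurrently live values, which is precisely the full-module MN count of Corollary~\ref{cor:full:mn}.
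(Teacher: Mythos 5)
Your proposal is correct and follows essentially the same route as the paper: the paper's proof simply observes that folding only reduces the number of MNs while the memory complexity stays the same as the full-module architecture of Corollary~\ref{cor:full:mn}. Your additional liveness argument (that lower-stage LLRs remain live while the sub-decoder is time-multiplexed, so no storage can be released) is exactly the justification the paper leaves implicit, not a different approach.
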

\begin{proof}
  The folded design only reduces the complexity of MNs. However, the memory complexity stays the same as the full module TCA-SCL architecture.
\end{proof}
Table~\ref{tab:implementation} gives FPGA results in accordance with Theorem~\ref{thm:latency}.

\subsection{Timing Analysis}\label{subsec:ta}
\begin{figure*} \centering
\subfigure[TCA-SCL polar decoder with SF scheme.] { \label{fig:la1}
\includegraphics[width = 0.85\textwidth]{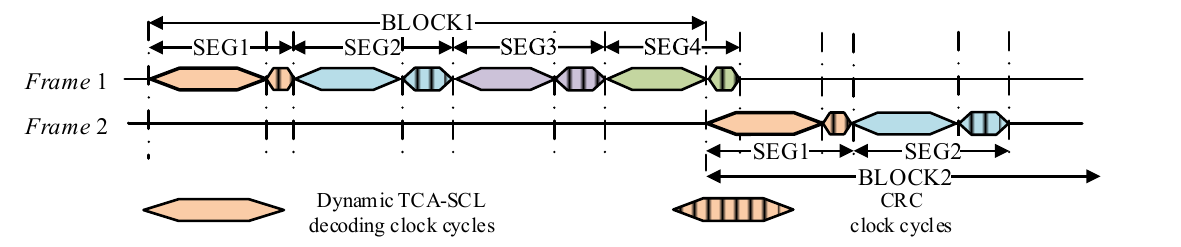}
}
\subfigure[TCA-SCL polar decoder with DF scheme.] { \label{fig:la}
\includegraphics[width = 0.85\textwidth]{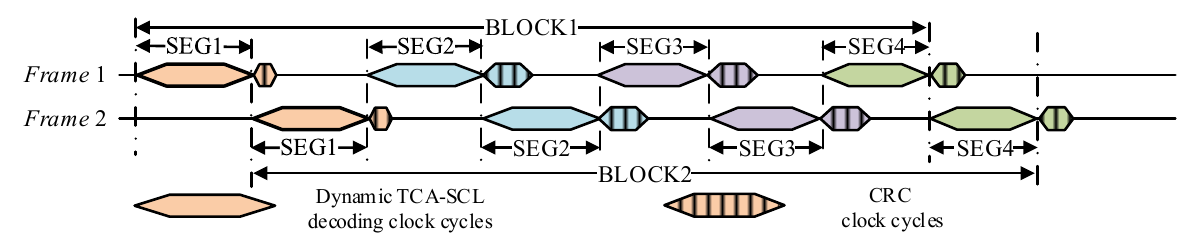}
}
\caption{Timing analysis for TCA-SCL polar decoder with SF and DF schemes.}\label{fig:lat}
\end{figure*}

\subsubsection{Single Frame Scheme}
As Fig.~\ref{fig:adaf} shown, the decoding process for TCA-SCL has the following steps: 1) In Segment $j$, MNs complete the main decoding in Eq.~(\ref{eqn_dbl_x}). The $2L$ LLRs correspond to $\hat{u}_i$ for each path. 2) $2L$ LLRs are input to the LC module. DS method~\cite{2016Liang} is employed to select the best $L$ paths. 3) The memory is updated and partial sum vector $\hat{\mathbf{u}}_{sum}$ is calculated for $\hat{u}_{i+1}$. 4) We repeat the above steps to get the $L$ paths for $\hat{u}^{(\frac{N}{P}j-1)}_1$. Then, $\hat{u}_{\frac{N}{P}j}$ is directly chosen as `$0$' or `$1$' for each path without decoding. After that, we input information bits in $\hat{u}^{\frac{N}{P}j}_{\frac{N}{P}(j-1)}$ for $2L$ paths to CRC$_j$ to pick up the only path for Segment $j+1$. CRC is implemented with linear feedback shift register (LFSR)~\cite{Lin2004Error}, and determines the coefficient of \textsc{xor}. Shown in Fig.~\ref{fig:adaf}, $P$ CRC modules are employed. It should be noted that here CRC$_j$ takes care of $2L$ paths in serial manner. Admittedly, designers can process $2L$ with parallel CRCs. Considering the simple CRC and its short processing time, serial manner is employed here.

The scheduling of this single frame (SF) scheme is shown in Fig.~\ref{fig:la1}. The latency of SF TCA-SCL decoder is
\begin{Thm}
  Assume the latency of CA-SCL is $T_{\rm{CA}}$ clock cycles. The latency for CRC$_i$ is $T_i$. For one SF $P$-segmented TCA-SCL decoder with list $L$, the decoding latency is
  \begin{equation}
    T_\mathrm{SF}=T_\mathrm{CA}+2L(T_1+...+T_{(P-1)}).
  \end{equation}
\end{Thm}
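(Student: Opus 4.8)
The plan is to split the single-frame (SF) decoding latency into two separable contributions: the latency of the mixed-node (MN) LLR computations that sweep the full decoding tree, and the latency of the serial CRC checks that the segmented structure inserts between consecutive segments. I would then identify the first contribution with the latency $T_{\rm{CA}}$ of a conventional (unsegmented) CA-SCL decoder, and show that the only genuine overhead introduced by segmentation is the intermediate CRC checking time.

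First I would argue that the MN computation schedule of the SF TCA-SCL decoder occupies the same number of clock cycles as a conventional CA-SCL decoder. Although, by Theorem~\ref{thm:full:mn}, the first $\log_2 P$ stages follow the SC rule with a single live candidate rather than a full list, the $L$ parallel processing units still advance in lockstep; shrinking the active candidate count from $L$ to $1$ merely idles hardware and does not shorten the clock-cycle count. Hence the tree-sweep portion of the SF latency coincides exactly with the MN portion of $T_{\rm{CA}}$, and the terminal check CRC$_P$ together with the final path selection plays the same role as the single end-of-decoding CRC of conventional CA-SCL, so it is already accounted for inside $T_{\rm{CA}}$ rather than counted again.

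Second I would locate the remaining CRC checks on the critical path, following the SF scheduling of Fig.~\ref{fig:la1}. After the last information bit of Segment $j$ is fixed to both `$0$' and `$1$', the decoder holds $2L$ candidate paths and feeds them serially into CRC$_j$, which selects the unique survivor that seeds Segment $j+1$. Because Segment $j+1$ cannot begin its MN computations until this survivor is known, CRC$_j$ is strictly serialized with the decoding and cannot be hidden — this serialization is precisely what separates the SF scheme from the DF scheme, in which the check overlaps a second frame. Each such check therefore costs $2L\,T_j$ cycles, and the checks for $j=1,\ldots,P-1$ all lie on the critical path. Summing the two contributions yields $T_\mathrm{SF}=T_\mathrm{CA}+\sum_{j=1}^{P-1} 2L\,T_j = T_\mathrm{CA}+2L(T_1+\cdots+T_{(P-1)})$.

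The main obstacle I anticipate is the bookkeeping at the two endpoints of this decomposition: justifying rigorously that the intermediate CRC latencies \emph{add} rather than overlap in single-frame mode (no pipelining of CRC$_j$ with the subsequent segment's MN computations is possible), while simultaneously showing that the final CRC$_P$ is correctly \emph{absorbed} into $T_{\rm{CA}}$ and not double-counted, so that the sum terminates at $P-1$ exactly. Establishing both facts consistently — essentially, that $T_{\rm{CA}}$ is the right $P=1$ baseline and that every one of the $P-1$ internal segment boundaries contributes one full $2L\,T_j$ serial stall — is the crux of the argument.
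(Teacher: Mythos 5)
Your decomposition — take $T_{\rm CA}$ as the baseline for the tree-sweep, then add one serial stall of $2LT_j$ cycles at each internal segment boundary $j=1,\ldots,P-1$ — is exactly the paper's argument, and your justification that the intermediate checks cannot be hidden in single-frame mode matches the intended scheduling. The one place you diverge is the endpoint bookkeeping you yourself flag as the crux: you drop the $P$-th term by claiming CRC$_P$ is ``already accounted for inside $T_{\rm CA}$,'' whereas the paper drops it for a different reason, namely inter-frame pipelining — checking Segment $P$ of Frame $1$ overlaps with decoding Segment $1$ of Frame $2$, and since the check time is shorter than the decode time it vanishes from the per-frame latency. The paper's route is the safer one: your absorption argument implicitly equates the cost of serially checking $2L$ paths against the $|C_P|$-bit CRC$_P$ with whatever final check is built into the conventional CA-SCL latency ($L$ paths against the full $m$-bit CRC), and nothing guarantees those two cycle counts coincide; the overlap argument needs no such identification. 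So the structure and the formula are right, but to close the gap at the last segment you should replace the absorption claim with the observation that CRC$_P$ is off the critical path because it runs concurrently with the next frame's first segment.
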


\begin{proof}
  After checking all $2L$ paths of Segment $i$, the decoder selects one path and begins to decode Segment $(i+1)$. In SF scheme, segmented CRC scheme increases latency for serial checking of Segment $i$
  \begin{equation}
    T_{\mathrm{crc}_i}=2LT_i.
  \end{equation}

  In SF scheme, checking Segment $P$ of Frame $1$ and decoding Segment $1$ of Frame $2$ can be done at the same time. Since the checking time is shorter than decoding time, the latency increase is
  \begin{equation}
    T_\mathrm{inc}=2L\sum\nolimits_{i=1}^{P-1} T_i.
  \end{equation}

  Now the proof is immediate.
\end{proof}

Folded module TCA-SCL decoder can also work in the proposed SF scheme.
\begin{Cor}
  Assume the folding technique introduces $F$ extra clock cycles per frame, the latency of SF folded module TCA-SCL decoder is
  \begin{equation}
    T_\mathrm{SF}=T_{\rm{CA}}+F+2L(T_1+...+T_{(P-1)}).
  \end{equation}
\end{Cor}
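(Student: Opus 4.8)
The plan is to treat this as an immediate consequence of the preceding single-frame latency theorem, since the folded architecture performs exactly the same sequence of segment decodings and serial CRC checks, differing only in how the mixed-node hardware is physically scheduled. First I would recall that the preceding theorem already fixed the SF TCA-SCL latency as $T_\mathrm{CA}+2L(T_1+\cdots+T_{(P-1)})$, where $T_\mathrm{CA}$ captures the full decoding cost and the second term is the serial-CRC overhead accumulated over the first $P-1$ segment boundaries.

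Next I would argue that folding is functionally transparent with respect to the segmented CRC schedule. By Theorem~\ref{thm:latency} and the construction in Section~\ref{subsubsec:arch2}, the folded module reuses a single sub-decoder of $(2^{\lceil n/2 \rceil}-1)L$ MNs across the first and second halves of the stages in a time-multiplexing manner, yet it still realizes precisely the two calculations of Eq.~(\ref{eqn_dbl_x}) and produces the same $2L$ candidate LLRs feeding each CRC$_i$. Hence the early-termination behavior, the per-segment path selection, and the serial checking of $2L$ paths at each of the first $P-1$ boundaries are all unchanged, so the term $2L(T_1+\cdots+T_{(P-1)})$ carries over verbatim.

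Finally I would account for the only additional cost. Because the two halves of the stages are now computed sequentially on the shared sub-decoder rather than on dedicated per-stage hardware, each frame incurs a fixed folding overhead, which the statement quantifies as $F$ extra clock cycles per frame. Since this overhead is orthogonal to, and additive with, both $T_\mathrm{CA}$ and the CRC term, summing the three contributions yields $T_\mathrm{SF}=T_\mathrm{CA}+F+2L(T_1+\cdots+T_{(P-1)})$.

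The step I expect to require the most care is the transparency argument: one must verify that interleaving the time-multiplexed stages does not create a data dependency that stalls or reorders the CRC checks relative to the non-folded schedule, so that the $F$ cycles genuinely add on top of the baseline rather than partially overlapping it. Given the assumption that $F$ is a per-frame constant introduced purely by folding, this reduces to confirming that the serial CRC checking at each segment boundary still overlaps the decoding of the next frame's first segment exactly as in the non-folded case, leaving the increment $F$ as the sole net change.
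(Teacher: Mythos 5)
Your proposal is correct and takes essentially the same route as the paper: the paper states this corollary without an explicit proof, treating it as an immediate consequence of the preceding SF latency theorem plus the standing assumption that folding contributes a fixed overhead of $F$ clock cycles per frame, which is exactly your additive decomposition $T_\mathrm{CA}+F+2L(T_1+\cdots+T_{(P-1)})$. Your added care about verifying that time-multiplexing the stages does not perturb the serial CRC schedule is a sensible elaboration, but since the corollary's hypothesis already packages the entire effect of folding into the per-frame constant $F$, nothing further is needed.
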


\subsubsection{Double Frame Scheme}
SF decoding introduces $2L(T_1+...+T_{(P-1)})$ extra clock cycles per frame. During CRC detection, all MNs are idle and HUR is therefore low. To this end, the double frame (DF) scheme is proposed.

The main idea of DF is shown in Fig.~\ref{fig:la}. Two frames are decoded simultaneously in an interleaved manner: when Frame $1$ checks (decodes) its Segment $i$, Frame $2$ decodes (checks) its Segment $i$ ($i-1$). Since both frames share the same architecture, every time a new segment is decoded, all LLRs in memory belong to the other frame. If we keep the decoding latency of each frame the same as CA-SCL decoder, Stage $1$ to $(\log_2 {P}-1)$ need an extra memory block of $q(\frac{N}{2}+\frac{N}{4}+...+\frac{2N}{P})$ bits to save LLRs, which is not appreciated by hardware design.

If no extra memory is available, each new segment begins its decoding with Stage $1$. In this way, DF scheme still requires a memory block of\\ $q\left(N-L+(L-1)\frac{N}{P}\right)$ bits with slightly increased latency. For DF full module TCA-SCL decoder:

\begin{Thm}
  For one DF $P$-segmented TCA-SCL decoder with list $L$, the decoding latency is
  \begin{equation}
    T_\mathrm{DF}=T_\mathrm{CA}+P\log_2 P -2P +2.
  \end{equation}
\end{Thm}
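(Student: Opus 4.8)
The plan is to isolate the \emph{only} place where the DF schedule differs from a plain CA-SCL pass, namely the top $\log_2 P$ stages of the decoding tree, and to turn the latency claim into an edge-counting statement there. First I would split the tree into a top part (Stages $1$ to $\log_2 P$) and a bottom part (Stages $\log_2 P + 1$ to $n$), exactly the partition already used in the proof of Theorem~\ref{thm:full:mn}. The bottom part decodes each size-$N/P$ subtree together with all the list and sorting work, and this is carried out identically in DF and in CA-SCL, so those cycles are already folded into $T_\mathrm{CA}$. Moreover, the defining feature of the interleaved DF schedule is that while Frame~$1$ checks its Segment~$i$, Frame~$2$ decodes its Segment~$i$; I would argue that the CRC-checking time is thereby hidden behind the partner frame's computation, so that, unlike the SF scheme, DF contributes no $2L\sum_i T_i$ penalty and the whole latency gap collapses to the extra work spent recomputing the top stages.

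Next I would count the top-stage cycles in each scheme. The top $\log_2 P$ stages form a complete binary tree whose $P=2^{\log_2 P}$ leaves are the roots of the $P$ subtrees; as noted in the proof of Theorem~\ref{thm:full:mn} (cf. Eq.~(\ref{eq:full:mn:1})), this region obeys the SC rule with effective list size $1$, so each edge is a single-candidate LLR update costing one clock cycle with the MNs available in the full module. In a conventional CA-SCL pass the intermediate LLRs persist in memory, so every one of the $2(P-1)$ edges of this tree (one $f$-edge and one $g$-edge per internal node) is computed exactly once, for a total of $2P-2$ cycles.

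The key observation for DF is that, since the shared memory is overwritten by the partner frame between two visits, a returning frame finds none of its top-stage LLRs preserved and must restart each segment from Stage~$1$, as stated in the text. Decoding Segment~$j$ therefore recomputes its entire root-to-subtree-root path, of length $\log_2 P$, valid because each $g$-edge on that path only needs partial sums from already-decoded left subtrees; this happens for all $P$ segments and gives $P\log_2 P$ top-stage cycles in total. Subtracting the baseline already contained in $T_\mathrm{CA}$ yields
\begin{equation}
T_\mathrm{DF}=T_\mathrm{CA}-(2P-2)+P\log_2 P=T_\mathrm{CA}+P\log_2 P-2P+2 .
\end{equation}

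The step I expect to be the main obstacle is making the overlap accounting rigorous: one must verify that in the two-frame interleave every CRC check and every bottom-stage computation is genuinely masked by the partner frame's activity, so that nothing but the top-stage recomputation leaks into the critical path, and that the one-cycle-per-edge charge for the top stages is exact rather than an approximation. As a sanity check I would confirm the small cases $P=1$ and $P=2$, where the formula returns $0$, consistently with the fact that in those cases DF recomputes precisely the same top-stage edges as CA-SCL and hence incurs no overhead.
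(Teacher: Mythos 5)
Your proposal is correct and follows essentially the same route as the paper: the paper likewise argues that the CRC check is masked because $T_{\mathrm{seg}_i}=\max\{T_{\mathrm{dec}_i},T_{\mathrm{crc}_i}\}$ with $T_{\mathrm{dec}_i}>2LT_i=T_{\mathrm{crc}_i}$, and then attributes the entire overhead to segments restarting from Stage~$1$, computing it as $\sum_{i=1}^{\log_2 P-1} i\cdot 2^i=P\log_2 P-2P+2$. Your bookkeeping (total top-stage activations $P\log_2 P$ minus the $2P-2$ edges a single CA-SCL pass computes) is a trivially equivalent way of evaluating the same sum, and your sanity checks at $P=1,2$ are consistent with it.
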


\begin{proof}
 For the interleaved manner in Fig.~\ref{fig:la}, the latency of each segment is
   \begin{equation}
    T_{\mathrm{seg}_{i}}=\max\{T_{\mathrm{dec}_{i}},T_{\mathrm{crc}_{i}}\},
  \end{equation}
  where $T_{\textrm{dec}_{i}}$ denotes the SCL decoding latency for Segment $i$, which includes SC decoding and DS. According to~\cite{2016Liang}, the DS latency for Segment $i$ is approximately $2LT_i$, therefore
  \begin{equation}
   T_{\textrm{dec}_{i}}>2LT_i.
  \end{equation}
 Since $T_{\textrm{crc}_{i}}=2LT_i$
     \begin{equation}
    T_\textrm{DF}=\sum\nolimits_{i=1}^{P}T_{\textrm{seg}_{i}}=\sum\nolimits_{i=1}^{P}T_{\textrm{dec}_{i}}.
  \end{equation}
  It is believed that there are $2^i$ segments, which could calculate from Stage $(i+1)$, now calculates from Stage $1$ and introduce latency of $i \cdot 2^i$. Therefore, the decoding increased latency is
  \begin{equation}
    T_\textrm{inc}=\sum\nolimits_{i=1}^{\log_2{P}-1} i \cdot 2^i=P\log_2{P}-2P+2.
  \end{equation}

  Now the proof is immediate.
\end{proof}

Folded module TCA-SCL decoder can also work in the proposed DF scheme with the following latency.

\begin{Cor}
  Assume the folding technique introduces $F$ extra clock cycles per frame, the latency of DF folded module TCA-SCL decoder is
  \begin{equation}
    T_\mathrm{DF}=T_\mathrm{CA}+F+P\log_2{P} -2P +2.
  \end{equation}
\end{Cor}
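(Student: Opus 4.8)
The plan is to reduce this corollary to the just-proved DF full-module theorem, showing that folding contributes exactly $F$ extra clock cycles per frame additively on top of the double-frame latency $T_\mathrm{CA}+P\log_2 P-2P+2$. First I would recall the structure of the DF full-module proof: there the per-frame latency is assembled as $T_\mathrm{DF}=\sum_{i=1}^{P}T_{\mathrm{seg}_i}=\sum_{i=1}^{P}T_{\mathrm{dec}_i}$ together with the restart surcharge $P\log_2 P-2P+2$ that arises because, with no spare LLR memory, each new segment recomputes from Stage~$1$. The target latency differs from this only by the additive term $F$, so the whole task is to locate where the folding overhead enters the accounting.

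Next I would argue that folding perturbs only the decode component $\sum_i T_{\mathrm{dec}_i}$ and leaves the restart surcharge untouched. By the same convention used in the preceding SF corollary, folding replaces the full MN array by the sub-decoder of $(2^{\lceil n/2\rceil}-1)L$ MNs and realizes Stages $(\lceil n/2\rceil+1)$ to $n$ by time-multiplexing on the same hardware; this temporal reuse lengthens each frame's decoding by a fixed $F$ cycles but does not change the logical count or ordering of LLR updates, and in particular does not change which stages a restarting segment must recompute. Hence the restart surcharge $P\log_2 P-2P+2$ is a scheduling quantity independent of whether the MN array is folded, and the folding penalty $F$ simply augments the decode total, giving $T_\mathrm{DF}=T_\mathrm{CA}+F+P\log_2 P-2P+2$.

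The main obstacle will be confirming that the folding cost is charged exactly once per frame in the interleaved DF setting, rather than doubled or partially hidden. Because DF overlaps two frames on shared hardware, one must check that the time-multiplexed reuse of the sub-decoder for one frame's segment does not collide with the other frame's activity in a way that inflates or absorbs $F$. I would resolve this by observing that folding and DF interleaving act on orthogonal resources---spatial reuse of the MN array versus temporal overlap of two frames---so that, from the viewpoint of a single frame, the folding overhead is precisely the same $F$ as in the SF corollary. With this independence established, the claimed latency follows immediately from the DF full-module theorem by substituting $\sum_i T_{\mathrm{dec}_i}\mapsto T_\mathrm{CA}+F$.
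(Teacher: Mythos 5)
Your proposal is correct and matches the paper's intent: the paper states this corollary without any explicit proof, treating it as immediate from the DF full-module theorem once folding is assumed to add a fixed $F$ cycles per frame, exactly as in the preceding SF folded corollary. Your additional check that $F$ is charged exactly once per frame and does not interact with the restart surcharge $P\log_2 P - 2P + 2$ is a sound elaboration of the same argument, not a different route.
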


Table~\ref{tab:consult} shows comparison between five different schemes: CA-SCL decoder, SF (DF) full module TCA-SCL decoders, and SF (DF) folded module TCA-SCL decoders. According to Section~\ref{subsec:vis}, CRC allocation is \\
$(|C_1|, |C_2|, |C_3|, |C_4|) = (3, 10, 11, 8)$. Data in red show the example of $N=1024$, $K=512$, $P=4$, and $L=2$.

\subsection{FPGA Implementation Results}\label{subsec:imp}
To better demonstrate the advantages of the proposed TCA-SCL decoders, FPGA implementations based on Altera Stratix V are given as well. To be in accordance with Table~\ref{tab:implementation}, five decoders have been implemented. The same parameters as the aforementioned example are employed here: $N=1024$, $K=512$, $m=32$, $P=4$, and $L=2$. All the five decoders employ the same LLR quantization scheme of $1$ sign bit, $6$ integer bits, and $1$ decimal bit. In Fig.~\ref{fig:q}, the FER performance comparison of floating SC and quantized-SC with $q=8$ bits indicates the validity of the quantized scheme.
    \begin{figure}[htb]
        \centering
            \includegraphics[width = 8cm]{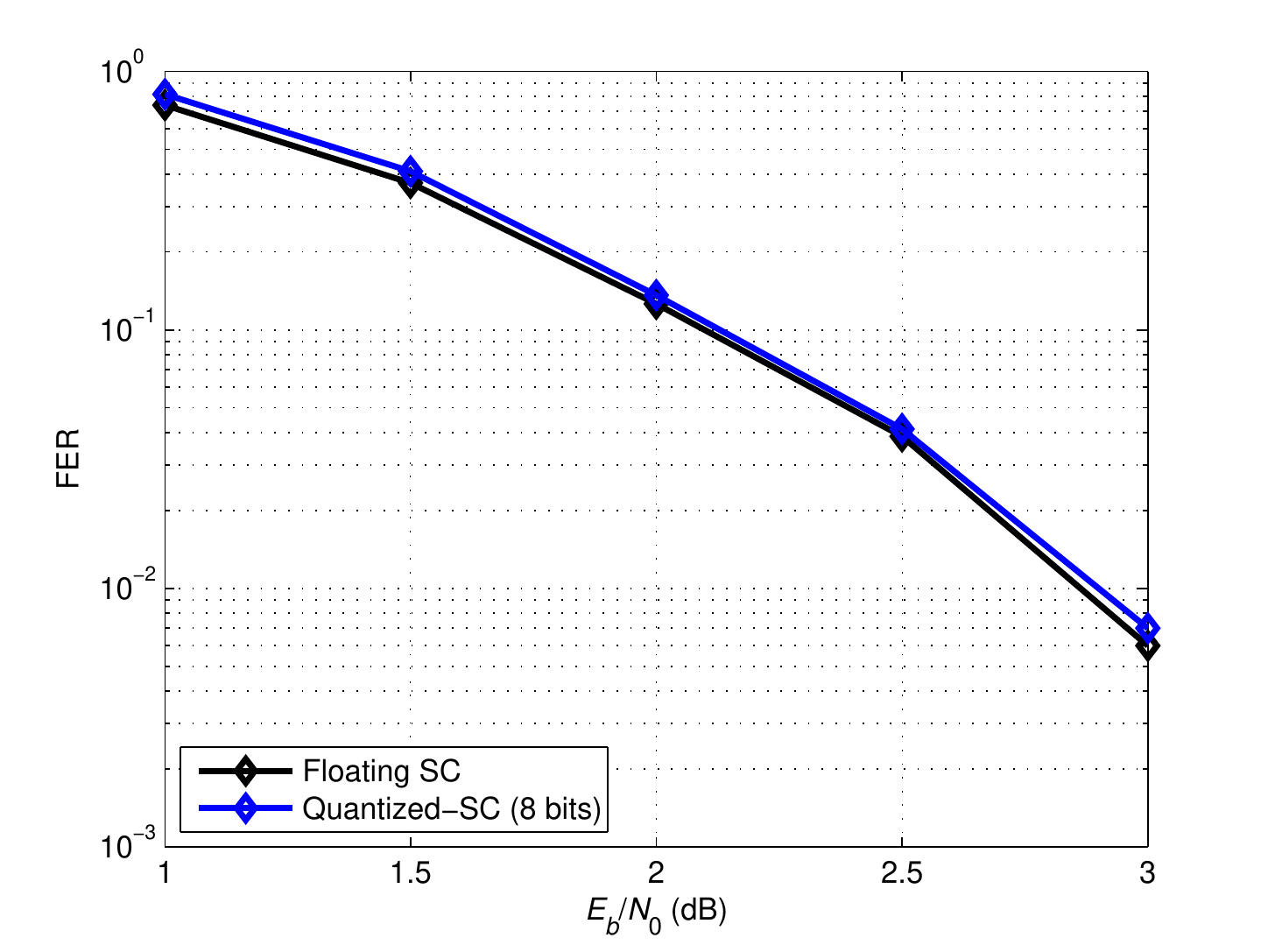}
        \caption{Performance comparison regarding quantization ($N = 1024$, $K = 512$).}\label{fig:q}
    \end{figure}

The implementation results are compared in terms of adaptive logic modules (ALMs), registers, and memory bits. It is shown that, compared to the CA-SCL decoder, TCA-SCL (SF or DF) decoder can achieve $18.8\%$ or $15.0\%$ ALM reduction. For further ALM reduction, with the help of folding technique, FTCA-SCL (SF or DF) decoder consumes $40.11\%$ or $42.9\%$ ALMs compared to TCA-SCL (SF or DF) with slightly increased latency, as analyzed in~\cite{2016Liang2}. It is also observed that the ALMs' reduction is not that much as the reduction of MNs listed in Table~\ref{tab:consult}. This is because Table~\ref{tab:consult} does not consider the comparison part, which introduces major part of ALMs consumption and stays the same between different architectures.

For implementation convenience, here memory has been employed by both folded decoders. Therefore, we consider the sum of registers and memory bits as the total memory consumption. It is observed TCA-SCL (SF or DF) decoder requires $77.03\%$ or $82.1\%$ memory compared to the CA-SCL decoder. Also, the introduction of folding technique does not affect the memory cost, which has been indicated by Theorem~\ref{thm:latency}. Comparing FTCA-SCL (DF) decoder and FTCA-SCL (SF) decoder, when DF scheme is employed, the latency can be reduced $15.90\%$ at the cost of $11.99\%$ increased ALMs.

For the latency issue, since the critical paths of all designs are determined by the critical path of the same SC decoding kernel, we believe it is safe to compare in term of clock number. It is shown that the segmented CRC decoders will introduce more latency due to more serial CRC operations. Second, the DF scheme is more time efficient. Third, the folded versions come at the cost of higher latency.

In general, the proposed four architecture of DC-SCL decoding can reduce the hardware consumption compared to CA-SCL decoder. Designers can choose the suitable one according to different application requirements.
\begin{table}[ht]
    \tabcolsep 1mm
    \renewcommand{\arraystretch}{1.2}
    \footnotesize
    \caption{FPGA Implementation Results for Different Schemes}
    \begin{center}
    \begin{tabular}{c || c | c | c | c}
    \Xhline{1.0pt}
    \textbf{Schemes} & ALMs & Registers & Memory & Latency\\
    \hline
     \textbf{CA-SCL}               &   $102,847$    &  $20,064$   &  $0$  &   $2655$   \\
     \textbf{TCA-SCL (SF) }        &    $83,529$    &   $15,456$  &  $0$  &   $3253$   \\
     \textbf{TCA-SCL (DF) }        &    $87,454$    &  $16,480$   &  $0$  &   $2657$   \\
     \textbf{FTCA-SCL (SF)}        &    $33,502$    &   $5,515$  &   $11,264$  &   $3749$   \\
     \textbf{FTCA-SCL (DF)}        &     $37,518$   &   $6,558$  &   $11,264$  &   $3153$   \\
    \Xhline{1.0pt}
    \end{tabular}\label{tab:implementation}
    \end{center}
    \end{table}

\section{Conclusions}\label{sec:CON}
In this paper, a segmented SCL polar decoding with tailored CRC is proposed. Method on how to choose the proper CRC for a given segment is proposed with help of concepts of virtual transform and virtual length. Numerical results have shown that the proposed TCA-SCL decoder can achieve better performance and lower complexity than conventional CA-SCL decoder. Thanks to the more reasonable CRC partition scheme, the TCA-SCL decoder can also outperform the PSCL decoder. For further performance improvement, HARQ-TCA-SCL scheme is proposed at the cost of increased complexity. Efficient architectures and FPGA implementations are also proposed for a good balance between hardware consumption and decoding latency.

\bibliographystyle{IEEEtran}
\bibliography{IEEEabrv,mybib}
\end{document}